\newtheorem{observation}{Observation}
\newtheorem*{rep@theorem}{\rep@title}
\newcommand{\newreptheorem}[2]{%
	\newenvironment{rep#1}[1]{%
		\def\rep@title{#2 \ref{##1}}%
		\begin{rep@theorem}}%
		{\end{rep@theorem}}}
\newcommand{\blue}[1]{\textcolor{blue}{#1}}
\newcommand{\magenta}[1]{\textcolor{magenta}{#1}}
\newcommand{\cyan}[1]{\textcolor{cyan}{#1}}
\newcommand{\green}[1]{\textcolor{green}{#1}}
\def\ShowComment{True}
\def\billy#1{\marginpar{$\leftarrow$\fbox{B}}\footnote{$\Rightarrow$~{\sf #1 \blue{--Billy}}}}
\def\billy#1{}
\def\ron#1{\marginpar{$\leftarrow$\fbox{R}}\footnote{$\Rightarrow$~{\sf #1 \magenta{--Ron}}}}
\def\ron#1{}
\def\shay#1{\marginpar{$\leftarrow$\fbox{S}}\footnote{$\Rightarrow$~{\sf #1 \cyan{--Shay}}}}
\def\shay#1{}
\def\yuval#1{\marginpar{$\leftarrow$\fbox{Y}}\footnote{$\Rightarrow$~{\sf #1 \green{--Yuval}}}}
\def\yuval#1{}
\title{
 Deterministic Leader Election in Programmable Matter}
\author{Yuval Emek}{Faculty of Industrial Engineering and Management, Technion - IIT, Haifa, Israel}{yemek@technion.ac.il}{}{The work of Y. Emek was supported in part by an Israeli Science Foundation grant number 1016/17.}
\author{Shay Kutten}{Faculty of Industrial Engineering and Management, Technion - IIT, Haifa, Israel}{kutten@ie.technion.ac.il}{}{The work of this author was supported in part by a grant from the Bi-national Science Foundation.}
\author{Ron Lavi}{Faculty of Industrial Engineering and Management, Technion - IIT, Haifa, Israel}{ronlavi@ie.technion.ac.il}{}{This research was supported by the ISF-NSFC joint research program (grant No. 2560/17).}
\author{William K. Moses Jr.\footnote{Corresponding author.}}{Faculty of Industrial Engineering and Management, Technion - IIT, Haifa, Israel}{wkmjr3@gmail.com}{0000-0002-4533-7593}{The work of this author was supported in part by a grant from the Israeli Ministry of Science.}
\titlerunning{Deterministic Leader Election in Programmable Matter}
\authorrunning{Y. Emek et al.}
\keywords{programmable matter, geometric amoebot model, leader election} 
\begin{document}

\maketitle

\begin{abstract}

Addressing a fundamental problem in programmable matter, we present the first deterministic algorithm to elect a unique leader in a system of connected amoebots assuming only that amoebots are initially contracted. Previous algorithms either used randomization, made various assumptions (shapes with no holes, or known shared chirality), or elected several co-leaders in some cases. 

Some of the building blocks we introduce in constructing the algorithm are of interest by themselves, especially the procedure we present for reaching common chirality among the amoebots. Given the leader election and the chirality agreement building block, it is known that various tasks in programmable matter can be performed or improved.

The main idea of the new algorithm is the usage of the ability of the amoebots to move, which previous leader election algorithms have not used.
\end{abstract}



\section{Introduction}
\label{sec:intro}

The notion of programmable matter was introduced by Toffoli and Margolus in \cite{TM91}. The main purpose was to provide a conceptual way to model matter that could change its physical properties in some programmable way. This model envisioned matter as a group of individual particles interacting with each other subject to some constraints. Individually, each particle was a single computational entity. But when taken together, they produced matter that could be programmed to act in specific desired ways. In the context of this framework, it becomes possible to address what computational problems can be solved by such groups of particles and how efficiently they can be solved.

The amoebot model was first proposed by Derakhshandeh et al.~\cite{DDGRSS14,DHRS19} as a possible abstraction for computing at the micro and nano scales. Amoebots represent finite memory mobile agents that move in a manner inspired by amoeba and must stay connected at all times. This model was quickly adopted by the community and much work was done on several problems such as coating of materials~
\cite{DGRSST14,DGRSS17b,DDGPRSS18}, bridge building~\cite{ACDRR18}, shape formation~
\cite{DGRSS15,CDRR16,DGRSS16,DFSVY18,DGHKSR18}, and shape recovery~\cite{DFPSV18}. An important primitive used often is the election of a unique leader that subsequently breaks symmetry and coordinates the remaining amoebots.

Leader election has been studied for a wide variety of assumptions. 
However, being such a fundamental primitive, an important goal is to develop a robust leader election algorithm that requires as few assumptions as possible.

\subsection{Amoebot Model}
The \emph{amoebot model} considers several \emph{amoebots} (also called \emph{particles}) located in the nodes of a graph $G$. In this paper, we consider the \emph{geometric amoebot model} where $G$ is assumed to be the infinite regular triangular grid naturally embedded in the plane as illustrated in Figure~\ref{fig:triangular-grid}.

\begin{figure}
	\includegraphics[page=1,height=2.2in]{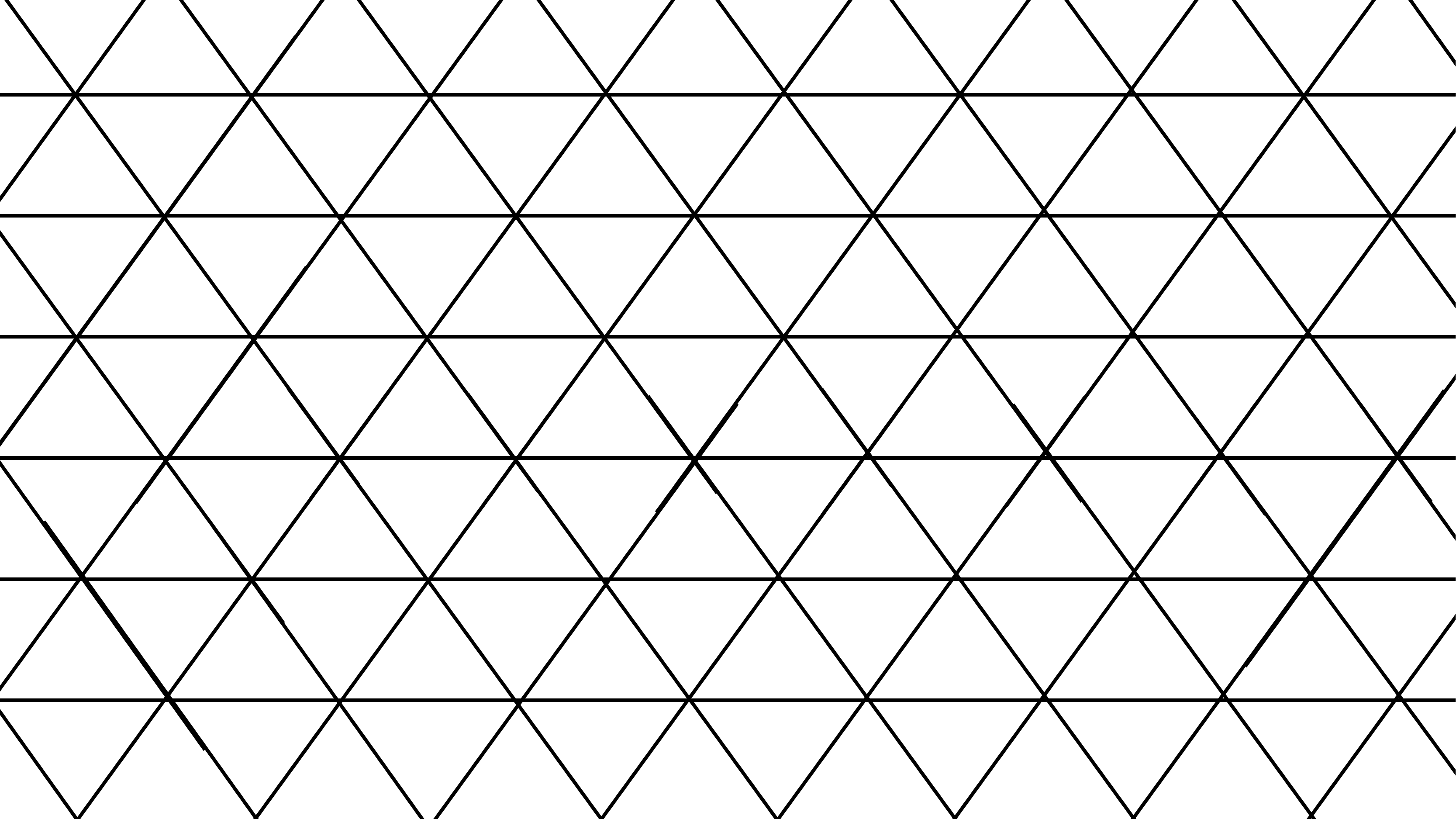}
	\caption{A cross-section of the natural planar embedding of an infinite regular triangular grid. A node of the graph is indicated by the intersection of the lines of the grid. A particle present at any node may move in any of the six directions indicated by lines going out from it.} \label{fig:triangular-grid}
\end{figure} 

Each particle has constant size memory and following the common practice in the amoebot model literature (see, e.g.,~\cite{DGSBRS15}), we assume that it is well-initialized prior to the start of the algorithm. In the \emph{leader election (LE)} problem, the particles have one of three \emph{(LE) statuses}: \emph{candidate}, \emph{leader}, or \emph{unelected}, denoted by \textbf{C}, \textbf{L}, and \textbf{U}, respectively. Initially, each particle is a possible leader and has status \textbf{C} and will permanently change its status to either \textbf{U} or \textbf{L} by the end of the algorithm. When we say that a node has one of the three statuses, we mean that the particle occupying that node has that status. Each particle is either \emph{contracted} or \emph{expanded} depending on whether it occupies one node or two adjacent nodes of the grid, respectively. 

The particles are classified according to their \emph{chirality} as \emph{clockwise (CW) particles} or \emph{counter-clockwise (CCW) particles} so that a CW (resp., CCW) particle numbers the \emph{ports} corresponding to the $6$ incident edges in (each of) the node(s) it occupies from $0$ to $5$ in increasing CW (resp., CCW) order, where the edge from which this numbering starts is chosen arbitrarily (refer to Figure~\ref{fig:cw-ccw-ordering-ports} for an illustration). We assume that the particle chirality classification is determined by a malicious adversary and that, initially, the particles are not aware of their own chirality nor are they aware of the chiralities of their adjacent particles. The \emph{degree} of a particle $P$ is the number of adjacent particles to $P$. 
 
The \emph{configuration} of particles at a given instant of time comprises of the subgraph induced on the grid by occupied nodes, the specific node(s) occupied by each particle, and each particle's internal memory. 
The configuration is said to be \emph{contracted} if all particles are in a contracted state. We follow the common practice (see~\cite{DGRSS17a}) and assume that the particles are, initially, in a contracted configuration. The algorithm terminates in a contracted configuration.



\begin{figure}
	\includegraphics[page=29,height=2.2in]{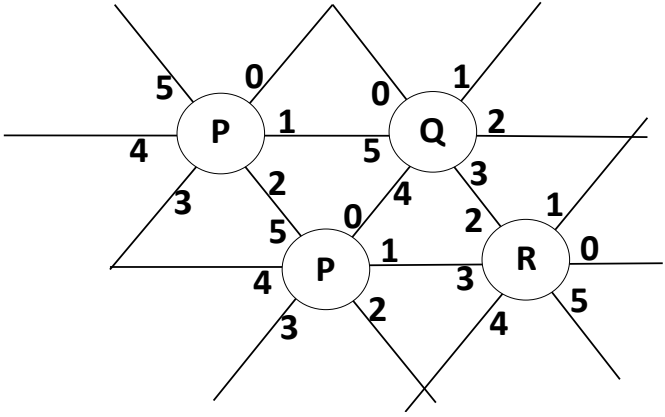}
	\caption{Port orderings of four nodes occupied by two CW particles $P$ and $Q$ and one CCW particle $R$.} \label{fig:cw-ccw-ordering-ports}
\end{figure} 

The subgraph $G(t)$ induced on the grid by the nodes occupied by particles at time $t$ is referred to as a \emph{shape}. Following the common practice in the amoebot model literature (see~\cite{DDGRSS14}), it is required that the shape is connected at all times. Since the shape is a (finite) planar graph associated with a planar embedding, it partitions the plane into \emph{faces} (see~\cite{D05}), where exactly one of them is unbounded, a.k.a. the \emph{outer} face. The occupied nodes adjacent to the outer face are said to form the \emph{outer boundary} of the shape. An inner face that includes at least one unoccupied node is called a \emph{hole} in the shape. The occupied nodes adjacent to a hole are said to form an \emph{inner boundary}. An example of a shape with holes illustrating which nodes are boundary nodes is given in Figure~\ref{fig:amoebots-two-holes}. The \emph{length} of a boundary $B$, denoted $L_B$, is the number of nodes on that boundary. Define $L_{\max} = \max \limits_{B} L_B$.

\emph{Boundary particles} are particles that lie on either an inner or outer boundary. A {\em local boundary} of a particle is an interval $i, i+1, \ldots, i+j \mod 6$ of its ports that lead to unoccupied grid nodes.
 Note that a contracted particle may have up to three local boundaries, each a part of some boundary of the shape. However, all three may be parts of the same boundary of the shape. Henceforth, 
 we use only the term ``boundary'' even for local boundaries, when the context makes the usage clear.
 
    A \emph{bridge particle} is a {\em contracted} boundary particle occupying a node $b$ lying on $i$ boundaries $1 \leq i \leq 3$, each of which is the outer boundary, and having $i$ occupied adjacent nodes in the grid. A \emph{semi-bridge particle} is a {\em contracted} boundary particle occupying a node $b$ lying on $2$ outer boundaries, and having $3$ or $4$ occupied adjacent nodes in the grid. If $b$ is occupied by a bridge or semi-bridge particle, $c$ is an adjacent occupied node, and both sides of the edge $(b,c)$ are the outer boundary, then edge $(b, c)$ is called a \emph{bridge edge}.\footnote{Note that a semi-bridge particle may have 3 adjacent occupied nodes and lie on 2 outer boundaries and 1 inner boundary. In this case, the particle will only have 1 bridge edge.}
An example of bridge particles and semi-bridge particles with bridge edges is illustrated in Figure~\ref{fig:bridge-semibridge-particles}.

\begin{figure}
	\includegraphics[page=3,height=2.2in]{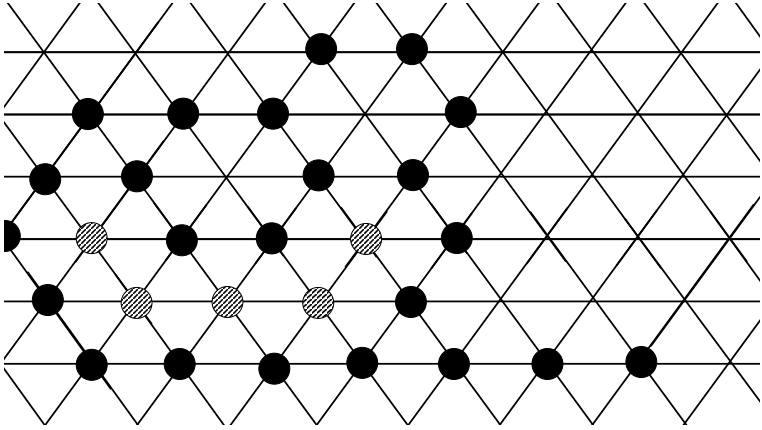}
	\caption{A shape with two holes. Boundary nodes are filled and non-boundary nodes are patterned.} \label{fig:amoebots-two-holes}
\end{figure} 

\begin{figure}
	\includegraphics[page=33,height=2.2in]{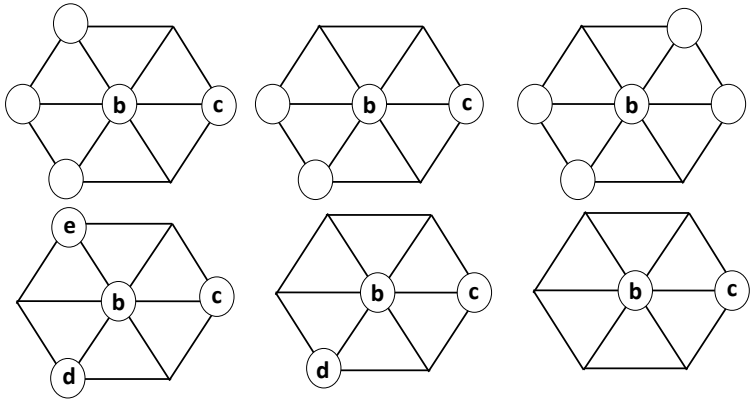}
	\caption{Let node $b$ be occupied by a contracted particle $P$. The top row and bottom row contain examples of when $P$ is a semi-bridge particle and a bridge particle, respectively. The edges $(b,c)$, $(b, d)$, and $(b,e)$ are bridge edges.} \label{fig:bridge-semibridge-particles}
\end{figure}


For a boundary node $b$ occupied by particle $P$ with chirality $C$ and lying on boundary $B$, define $b$'s \textit{predecessor} node $a$ and \textit{successor} node $c$ w.r.t.\ $B$ and $C$ as the previous occupied node and the next occupied node along $B$ according to $C$, respectively (refer to Figure~\ref{fig:predecessor-successor} for an illustration).\footnote{Throughout, we use w.r.t.\ to abbreviate `with respect to'.} Note that node $b$ admits such predecessor $a$ and successor $c$ for each boundary $b$ lies on. 

The \textit{boundary count} of $b$ w.r.t.\ $B$ and $C$ measures the deviation of the line segment formed by $b$ and its successor from the line segment formed by $b$'s predecessor and $b$ w.r.t.\ $B$ taking $C$ into account. More formally, the boundary count of $b$ w.r.t.\ $B$ is a function of $C$ and the angle $\angle abc$ that takes on one of the values $-1, 0, 1, 2,$ or $3$ (as illustrated in Figure~\ref{fig:boundary-count}).\footnote{Note that it is not possible to have a node with boundary count -2 or -3 w.r.t.\ some boundary.~\\ The boundary count and its application to calculating the count of a segment, to be defined later, is similar to how Bazzi and Briones~\cite{BB18} use vertex labeling in deciding the count of a segment in their paper. The actual measurement of the boundary count is similar to how Derakhshandeh et al.~\cite{DGSBRS15} measure the angles between the direction a token enters and exits an agent.} Let $i$ be the unique integer that satisfies $\angle abc = 180^\circ - i*60^\circ$. Let $x$ and $y$ be the port numbers of $b$ corresponding to edges $(b,a)$ and $(b,c)$, respectively. If $(x - y) \mod 6 = 4$, then the boundary count of $b$ w.r.t.\ $B$ is $-i$, else it is $i$. In the current paper, when the boundary referred to is clear from context, it is not mentioned when giving the boundary count for a node.

Consider an occupied node $b$ on boundary $B$ with boundary count $w$. The following definitions for $b$ are all w.r.t.\ $B$. When $w= -1, 1, 2,$ or $3$, $b$ is a \emph{vertex}.\footnote{Notice that the angle bisector of a vertex with boundary count 1 or -1 overlaps with a line of the triangular grid.} When $w=2$, $b$ is a \emph{sharp} vertex.  Vertex $b$ is \emph{concave} when $w=-1$ and \emph{convex} when $w=1$ or $2$. A shape whose outer boundary vertices are all convex w.r.t.\ the outer boundary is a \emph{convex polygon}.  A shape consists of \emph{two (or more) simple convex polygons sharing the same semi-bridge particle(s) $P$ (, $Q$, $R$, etc.)} when (i) $P$ (, $Q$, $R$, etc.) has no adjacent bridge edges, (ii) the shape is disconnected by removing $P$ (, $Q$, $R$, etc.), and (iii) all vertices other than those occupied by $P$ (, $Q$, $R$, etc.) are convex vertices.  Notice that the definition of convex polygon relates only to its outer boundary nodes. Specifically, no assumptions are made on the presence of holes within the shape.

\begin{figure}
	\includegraphics[page=4,height=2.2in]{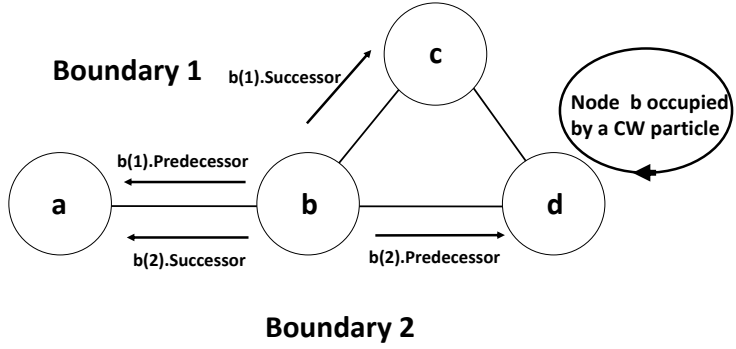}
	\caption{Cross-section of a shape with four boundary nodes $a$, $b$, $c$, and $d$, where $b$ is occupied by a CW particle. The predecessor and successor nodes of $b$ are shown with respect to the two boundaries $b$ lies on.} \label{fig:predecessor-successor}
\end{figure} 

\begin{figure}
	\includegraphics[page=16,height=2.2in]{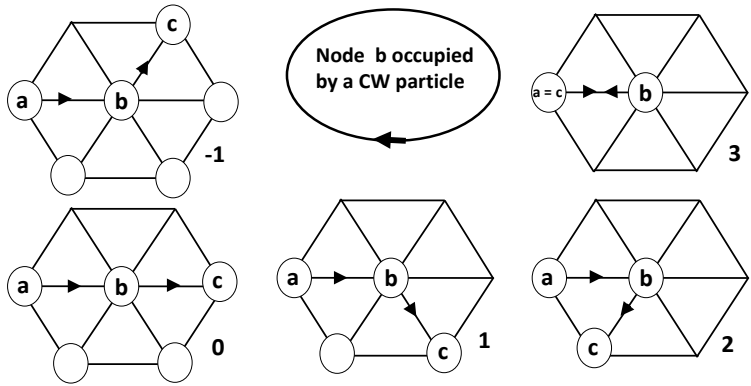}
	\caption{The five possible boundary count values for a node $b$, lying on exactly one boundary, with predecessor node $a$ and successor node $c$ w.r.t.\ that boundary.
	} \label{fig:boundary-count}
\end{figure} 

\subsubsection{Scheduling of Particles}
The particles are said to be \emph{activated} by an asynchronous scheduler such that the time interval between any two activations of the same particle is finite. Each activation of a particle $P$ consists of 3 stages.
\begin{enumerate}
	\item $P$ reads the memories of adjacent particles.
	\item $P$ performs some local computation and may update its own memory and/or the memories of its neighboring particles.
	\item $P$ may perform a movement operation.
\end{enumerate}
Each activation of a particle is atomic in that once activated, a particle will complete all 3 stages before the next particle is activated. One \emph{asynchronous round} is completed when each particle is activated at least once. For convenience, when we say that a node $b$ performs an action it means that the particle occupying $b$ performs that action when activated. Notice that a particle may occupy two nodes and maintain different state information for each node it occupies.

\subsubsection{Movement of Particles}
As mentioned already, particles occupy either one node or two nodes of the graph. Particles  only move to adjacent nodes in the graph so long as connectivity of the shape formed by the particles is not broken. The particles move via a series of \textit{expansions} and \textit{contractions}. A particle performs an expansion (resp., contraction) only when it is in a contracted (resp., expanded) state. 

Consider a contracted particle $P$ occupying a node $a$ and an expanded particle $Q$ occupying nodes $b$ and $c$. $P$ expands by extending itself to an adjacent unoccupied node such that $P$ now occupies two nodes. Furthermore, $P$ maintains a distinct notion of \emph{head} node and \emph{tail} node referring to the nodes which $P$ expanded into and expanded from, respectively. $Q$ contracts by moving itself completely into either $b$ or $c$. 
If $a$ and $b$ are adjacent, the model allows $P$, when activated, to expand into $b$ and force $Q$ to contract into $c$. $P$ is said to have \emph{pushed} $Q$ into $c$. Similarly, the model allows $Q$, when activated, to \emph{pull} $P$ into $b$ by contracting into $c$ and forcing $P$ to expand into $b$.

\subsubsection{Inter-particle Communication}
An activated particle $P$ communicates with the set of particles $\mathcal{S}$ located in nodes adjacent to $P$'s node(s) by reading and/or writing to their memories. For convenience, say that $P$ \emph{receives messages} from particles that were earlier activated and wrote into $P$'s memory and $P$ \emph{sends messages} to particles in $\mathcal{S}$ that it writes into the memories of.\footnote{Note that $P$ allocates memory for each of its ports and that is the memory that can be modified by adjacent particles. In other words, when $P$ receives a message, it knows through which port the message was sent and by extension which particle sent it.}

\subsubsection{Problem Statement}
Consider a contracted configuration of particles forming a connected shape. Particles may not have common chirality initially and the shape may have holes. Design a terminating algorithm to be run by each particle such that at the end, one particle has status \textbf{L} and the remaining particles have status \textbf{U}. 

\subsection{Our Contributions}
The current paper presents the first deterministic protocol that elects exactly one leader on any contracted configuration. We first assume that all particles have common chirality and Section~\ref{sec:chir-agreement} explains how to remove this assumption. For a comparison of the result to known results, see Table~\ref{table:results-comparison}.

Before presenting the main algorithm, in Section~\ref{sec:building-blocks}, four building blocks are developed that may be of interest on their own. A fifth building block (chirality agreement) is presented in Section~\ref{sec:chir-agreement}. The tools are a maximal independent set (MIS) selection protocol, a boundary detection protocol, a leader election protocol on a convex polygon without sharp vertices, and a leader election protocol on a spanning tree. 

The main protocol, $\mathtt{Leader-Election-By-Moving}$, is presented in Section~\ref{sec:leader-election-by-moving}. It is a 6 stage deterministic algorithm which utilizes the tools mentioned in the previous paragraph as well as other techniques to finally elect exactly one leader.

The assumption that the particles have a common chirality is removed in Section~\ref{sec:chir-agreement}, i.e. a procedure is presented that guarantees all particles will have common chirality when the procedure terminates.

\begin{table*}[ht]
	\caption{Table comparing the result on leader election to those of previous papers. ``No holes" refers to whether the algorithm requires the graph to have no holes initially or not. ``Multiple leaders" refers to whether the leader election algorithm may output multiple leaders in certain cases or always outputs a unique leader. The length of the largest boundary in the initial configuration is $L_{\max}$. The length of the outer boundary in the initial configuration is $L$. The number of particles in the configuration is denoted by $n$. The terms $r$ and $mtree$ are unique to paper~\cite{GAMT18}.}
	\centering \vspace{1em}
		\resizebox{1.0\columnwidth}{!}{%
	\begin{tabular}{|c|c|c|c|c|c|}
		\hline
		Paper & Common & Randomness & No holes & Multiple  & Running time \\
		& chirality &  &  & leaders & \\
		\hline
		\hline
		\cite{DGSBRS15} & Yes & Yes & No & No & $O(L_{\max})$ rounds on expectation \\
		\hline
		\cite{DGRSS17a} & Yes & Yes & No & No & $O(L)$ rounds with high probability  \\
		\hline
		\cite{BB18} & Yes & No & No & Yes & Not analyzed in paper \\
		\hline
		\cite{DFSVY18} & No & No & Yes & Yes & $O(n)$ rounds \\
		\hline
		\cite{GAMT18} & Yes & No & Yes & No & $2(r + mtree + 1)$ rounds \\
		\hline
		Current Paper & No & No & No & No & $O(Ln^2)$ rounds\\
		\hline
	\end{tabular}
		}
	\label{table:results-comparison}
\end{table*}

\subsection{Related Work}
Derakhshandeh et al.~\cite{DGSBRS15} were the first to study leader election in the amoebot model. Assuming common chirality initially, they proposed a randomized algorithm to achieve leader election in $O(L_{\max})$ rounds on expectation, where $L_{\max}$ was the length of the largest boundary in the shape. Derakhshandeh et al.~\cite{DGRSS17a} also assumed common chirality initially but improved upon the result by presenting a randomized algorithm that elected a unique leader in $O(L)$ rounds, where $L$ is the length of the outer boundary of the shape. A deterministic algorithm was presented by Di Luna et al.~\cite{DFSVY18} to elect a leader and obtain common chirality for the natural special case that the shape did not contain holes; (a constant number of) multiple leaders could be elected in some cases. Di Luna et al.~\cite{DFSVY18} then use the leader(s) to perform shape transformation. Bazzi and Briones~\cite{BB18} assumed common chirality and presented a brief announcement outlining an algorithm to deterministically elect a leader and in some cases, a constant number of multiple leaders, even when holes were present. The current paper adapts and uses a sub-routine from their paper. Gastineau et al.~\cite{GAMT18} presented a leader election algorithm that assumed common chirality and a connected hole-free shape of particles and elected a leader with the additional property of assigning an identifier to each particle that is unique within a radius of $k$ particles. Their result holds for triangular, square, and king grids, while the current paper focuses on the triangular grid alone.\footnote{It is possible to adapt the current paper's leader election algorithm to run on king grids.}

The type of asynchronous scheduler used affects the leader election results. Typically in the literature~\cite{DGSBRS15,DGRSS17a,GAMT18}, the scheduler provides conflict resolution mechanisms for movement and communication such that particle activations can be analyzed sequentially, i.e., the activation of each particle is atomic. However, when a scheduler is allowed to schedule such particles simultaneously~\cite{BB18,DFSVY18}, it becomes impossible to elect deterministically a unique leader in some cases.

\subsection{Technical Challenges and Ideas}
Multiple ideas are combined here in order to address different cases. Consider, for example, a polygon with a hole.
One approach in previous algorithms, assuming no holes, was to remove (from being candidates) boundary nodes repeatedly until only one remains. In the case of one hole (addressed by one of our subroutines), the present algorithm utilizes the ability of particles to move. Intuitively, they may move (eventually) to the center of the polygon, and
the particle reaching the center first is the elected one. (Thanks to the sequential scheduler, only one can reach a certain node first).

This, of course, requires our algorithm to perform various maneuvers, to identify the center and to make sure no additional holes remain. In particular, particles have to identify the outer boundary, move outward in order to gain a symmetric shape, and then move inward together so no additional holes are formed. Since multiple polygons may be moving at the same time, two polygons may ``collide'' and not manage to finish the maneuver. There, we use the idea of reset, to restart the algorithm for the new shape. We managed to upper bound the number of such resets.

Because of the existence of bridge (and semi-bridge particles), solving for a single simple polygon is not enough. For example, consider the case that the shape is a long line (possibly connecting simple polygons). Here,
we use the fact that there exists only one outer face (borrowing its detection from the algorithm of \cite{BB18}, with some necessary adaptations).
The partial leaders of the simple polygons cooperate to define a tree that spans the simple polygons. Final leader election is then performed over the tree.

The assumption of common chirality is used throughout the paper. To remove this assumption and have particles agree on chirality, we use again the detection of the outer face. The particles on the outer boundary agree on chirality (this turned out to be easier for us than agreeing on a leader among them, using the local symmetry breaking provided by the scheduler). Then, the outer boundary particles coordinate and propagate this shared chirality to the other particles within the shape.

\subsection{Organization of the Paper}
The rest of this paper is divided as follows. 
In Section~\ref{sec:building-blocks}, four building blocks are presented that are subsequently used in the main protocol. Section~\ref{sec:leader-election-by-moving} contains the main new leader election protocol. Section~\ref{sec:chir-agreement} contains a new procedure for obtaining common chirality of particles in a configuration. Finally, some conclusions and possible future directions of work are discussed in Section~\ref{sec:conc}.




\section{Building Blocks}
\label{sec:building-blocks}

In this section, we present the four building blocks in detail. 

We first give several definitions. Each boundary particle $P$ maintains a binary flag $seg\_head$ in each boundary node $b$ that $P$ occupies w.r.t.\ each boundary that $b$ lies on. For convenience, when $P$ is a contracted particle, we simply say that $P$ maintains such a flag for each boundary it lies on. When $P$ occupies a boundary node and has $seg\_head = true$ for that boundary, we say $P$ is a \emph{seg-head} for that boundary. Consider two seg-heads $P_1$ and $P_2$ on boundary $B$, occupying nodes $b_1$ and $b_2$ with predecessor nodes $c_1$ and $c_2$ and successor nodes $d_1$ and $d_2$ w.r.t.\ $B$, respectively. $P_2$ is the \emph{previous (resp., next) seg-head} before (resp., after) $P_1$ iff the particles in successor (resp., predecessor) nodes from $b_2$ to $b_1$ w.r.t.\ $B$ (excluding $b_1$ and $b_2$) have $seg\_head$ set to $false$. 

Let $P_2$ be the next seg-head after $P_1$ w.r.t.\ $B$. $P_1$'s \emph{segment} is the sequence of successor nodes from $b_1$ to $c_2$ with \emph{head} $b_1$ and \emph{tail} $c_2$. It is said that $P_1$'s segment is before $P_2$'s segment or $P_2$'s segment is after $P_1$'s segment w.r.t.\ $B$. The \emph{count} of a segment w.r.t.\ $B$, stored by our procedures in the segment's head, is the sum of the boundary counts of its constituent vertices w.r.t.\ $B$. For the sake of convenience, when referring to a procedure/action initiated by the head of a segment involving the participation of the particles in that segment, we just say that a segment runs the procedure/performs the action. It is important to note that a particle $P$ may participate in multiple segments simultaneously (one per boundary $P$ lies on). The algorithm needs to be careful to prevent contradicting actions of such segments (for example, preventing one segment from expanding $P$ into one node while another segment is trying to expand $P$ into a different node).

 The lexicographic comparison of two segments $s_1$ and $s_2$ consists of comparing the boundary counts of their nodes from head to tail. If $s_1$ and $s_2$ are of the same length and have the same boundary counts, then they are said to be lexicographically equal ($s_1 \equiv s_2$). Otherwise, let the position within the segment that $s_1$ and $s_2$ differ in the boundary counts be $x$. If the node of $s_1$ at position $x$ has a boundary count less than that of $s_2$ or the size of $s_1$ is $<x$, then $s_1$ is lexicographically lower than $s_2$  ($s_1 < s_2$). Else $s_1$ is lexicographically higher than $s_2$ ($s_1 > s_2$).

\subsection{MIS Selection}
\label{subsec:mis-selection}

This tool is called $\mathtt{MIS-Selection}$ and is used as a procedure in our main algorithm. The following trivial observation already breaks with impossibility results in other models when the scheduler is not asynchronous.\footnote{In particular, this procedure selects a leader when 2 or 3 mutually adjacent particles participate.}

\alglanguage{pseudocode}
\begin{algorithm}
\caption{MIS-Selection, run by each particle $P$}
\label{prot:MIS-Selection}
\begin{algorithmic}[1]	
	\If{no other adjacent particle has joined the MIS}
		\State $P$ joins the MIS
	\Else
		\State $P$ does not join the MIS
	\EndIf
\Statex
\end{algorithmic}
\end{algorithm}

\begin{observation}\label{obs:mis-selection}
When run by particles, procedure $\mathtt{MIS-Selection}$ deterministically computes an MIS in one round.
\end{observation}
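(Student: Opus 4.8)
The plan is to argue directly from the definition of an MIS together with the atomicity of the asynchronous scheduler. Recall that a maximal independent set of the shape graph $G(t)$ is a set $I$ of particles such that (i) no two particles in $I$ are adjacent (independence), and (ii) every particle not in $I$ has at least one neighbor in $I$ (maximality). I will show that the set $I$ of particles that have joined the MIS at the end of one asynchronous round satisfies both properties, and that every particle's status (in or out of the MIS) is decided within that round.

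First I would observe that in one asynchronous round every particle is activated at least once, so by the end of the round each particle has executed Algorithm~\ref{prot:MIS-Selection} and thus has either joined or not joined the MIS; in particular the output is well-defined after one round. For independence, consider the first activation (in the sequential order imposed by the atomic scheduler) at which some particle $P$ joins the MIS. At that moment no adjacent particle had joined, by the guard on line~1; and once $P$ has joined, any later-activated neighbor $Q$ of $P$ will see that an adjacent particle (namely $P$) has joined and will fall into the \textbf{else} branch, hence never joins. Since a particle only joins on an activation where the guard holds and never leaves afterwards, no two adjacent particles can both be in $I$ — a quick induction on the activation order makes this precise. For maximality, suppose some particle $P$ did not join $I$. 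Then on \emph{every} activation of $P$ the guard on line~1 failed, meaning at that time some neighbor of $P$ had already joined the MIS; since membership is permanent, that neighbor is still in $I$ at the end of the round, so $P$ has a neighbor in $I$. Hence $I$ is maximal, and combining the two properties, $I$ is an MIS computed deterministically (the algorithm has no randomness) in one round.

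The only subtlety — and the ``main obstacle'', though a mild one — is that the scheduler is asynchronous rather than synchronous: the correctness crucially relies on the \emph{atomicity} of each activation (stages 1--3 complete before the next particle is activated), which serializes activations into a well-defined order and lets the inductive argument above go through. If activations could overlap, two mutually adjacent particles could each read the other's state as ``not joined'' and both join, breaking independence; the excerpt's remark that this breaks impossibility results in other models is precisely the observation that atomicity of the scheduler is what makes the trivial rule work. I would therefore make the atomicity assumption explicit at the point where I invoke ``the first activation at which some particle joins'' and where I claim a later-activated neighbor sees the updated state. No further case analysis is needed.
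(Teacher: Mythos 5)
Your proof is correct and is exactly the argument the paper has in mind: the paper states this observation without proof (calling it trivial), and the intended justification is precisely your point that the atomic, sequential activations guarantee independence (a later-activated neighbor sees the earlier join) and maximality (a particle declines only because some neighbor has permanently joined), with every particle deciding within one round. Nothing is missing; your explicit emphasis on atomicity is the same remark the paper makes when noting that the scheduler is what makes this trivial rule work.
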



\subsection{Boundary Detection}
\label{subsec:boundary-detection}

$\mathtt{Boundary-Detection}$ is a parameterized procedure run by boundary nodes with common chirality to tell each such node $b$, for each boundary $B$ that $b$ lies on, whether $B$ is an inner or outer boundary. This procedure is a modification of the first phase of the algorithm presented in Bazzi and Briones~\cite{BB18}, specifically adapting their subroutine $\mathtt{Stretch Expansion}$ to handle (1) inner boundaries and (2) an edge case that may not be needed (and is not addressed) in~\cite{BB18} but is needed here where two adjacent segments both have count $6$ but are lexicographically different (see Figure~\ref{fig:stretch-expansion-edge-case}).\footnote{Recall that~\cite{BB18} is a brief announcement and this edge case may be handled in the full version of their paper.} These adaptations result in subroutines $\mathtt{Inner-Stretch-Expansion}$ and $\mathtt{Outer-Stretch-Expansion}$, respectively. Modifications to the pseudocode from Bazzi and Briones~\cite{BB18} are highlighted with blue text.

For the sake of self-containment, the entire modified boundary detection procedure is described here (translating some of the terms used in~\cite{BB18} to the terms of the current paper.\footnote{In particular, ``stretches" are translated to segments. Moreover, in \cite{BB18}, each boundary node $b$ had virtual ``nodes" associated with each boundary $b$ was on. For a given boundary, the sum of the labels of these ``nodes" of $b$ is equivalent to the boundary count of $b$.} 
Initially, each boundary node $b$ sets $seg\_head = true$ w.r.t.\ every boundary $B$ that $b$ lies on. Each node $b$, for each boundary $B$ that $b$ lies on, maintains two segments, one participating only in $\mathtt{Outer-Stretch-Expansion}$ and the other only in $\mathtt{Inner-Stretch-Expansion}$ until $b$ receives a termination message for each boundary it lies on. For a given segment $s$ on boundary $B$, if $\mathtt{DetectTermination()}$ returns $true$, then there are $6/|s.count|$ segments on the boundary and the head of each segment will have $seg\_head = true$ w.r.t.\ $B$.  Furthermore, for boundary $B$, exactly one of $\mathtt{Outer-Stretch-Expansion}$ or $\mathtt{Inner-Stretch-Expansion}$ successfully called $\mathtt{DetectTermination()}$, corresponding to $B$ being an outer boundary or inner boundary, respectively. Subsequently, $s$ can send a message to each of its $6/|s.count|$ next segments  informing them that it is ready to terminate. Once the head of $s$ receives $6/|s.count|$ such messages from its previous segments on the boundary, it informs all nodes in $s$ to terminate the procedure.

$\mathtt{Outer-Stretch-Expansion}$, initiated by a segment $s$ attempts to merge $s$ with its next segment $s'$. The end goal of repeatedly running this subroutine is to form $k \in \{1,2,3,6\}$ segments along the outer boundary, each with count $6/k$. When $s$ has $count > 0$, it invokes the subroutine and compares its count to $s'$'s count. If $s.count> s'.count$ and $s.count + s'.count \leq 6$, then $\mathtt{Merge(s,s')}$ is called. If $s.count = s'.count$ and $s$ and $s'$ are lexicographically equal, then $\mathtt{DetectTermination()}$ is called. If $s.count = s'.count$, $s$ is lexicographically greater than $s'$, and $s.count + s'.count \leq 6$, then $\mathtt{Merge(s,s')}$ is called. $\mathtt{Inner-Stretch-Expansion}$ is similar to $\mathtt{Outer-Stretch-Expansion}$, but the conditions are modified such that the end result of repeatedly running $\mathtt{Inner-Stretch-Expansion}$ is $k \in \{1,2,3,6\}$ segments along an inner boundary, each with count $-6/k$. The pseudocode for the subroutines, with modifications from the original $\mathtt{Stretch Expansion}$ in blue, is given in Subroutine~\ref{prot:inner-stretch-expansion} and Subroutine~\ref{prot:outer-stretch-expansion}.

$\mathtt{Merge(s,s')}$, initiated by segment $s$, consists of the head of $s'$ setting $seg\_head = false$ and the head of $s$ updating its count to $s.count + s'.count$. 

When a segment $s$ runs $\mathtt{DetectTermination()}$, it lexicographically compares itself with the previous $6/|s.count|$ segments one by one. If it is lexicographically equal to each of them (and their $count$ value did not change during the subroutine), then $\mathtt{DetectTermination()}$ returns $true$.

\begin{figure}
	\includegraphics[page=18,height=2.2in]{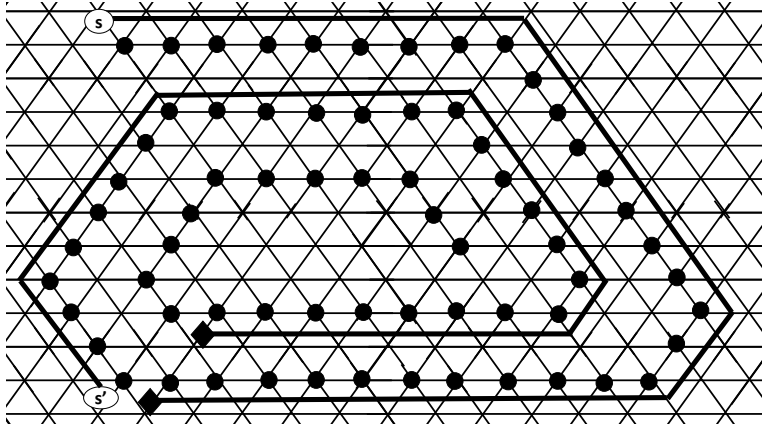}
	\caption{Edge case handled by $\mathtt{Outer-Stretch-Expansion}$. Segments $s$ and $s'$ both have count $6$ but $s$ is lexicographically larger than $s'$. They should not merge.} \label{fig:stretch-expansion-edge-case}
\end{figure} 

\alglanguage{pseudocode}
\begin{algorithm}
	\caption{Outer-Stretch-Expansion}
	\label{prot:outer-stretch-expansion}
	\begin{algorithmic}[1]	
		\Function{AttemptExpansion}{}
		\State $\triangleright$ $s$ and $s'$ are two adjacent segments where $s'$ is after $s$.
		\If {$s.count > s'.count \wedge (s.count + s'.count \leq 6 \wedge s.count >0)$}
			\State Merge($s,s'$)
		\ElsIf{$s.count = s'.count = 1, 2, 3,$ or $6$}
			\If{$s \equiv s'$}
				\State DetectTermination()
			\ElsIf{$s > s' {\color{blue}\wedge (s.count + s'.count \leq 6)}$}
				\State Merge($s,s'$)
			\EndIf
		\EndIf
		\EndFunction
		\Statex
	\end{algorithmic}
\end{algorithm}

\alglanguage{pseudocode}
\begin{algorithm}
	\caption{Inner-Stretch-Expansion}
	\label{prot:inner-stretch-expansion}
	\begin{algorithmic}[1]	
		\Function{AttemptExpansion}{}
		\State $\triangleright$ $s$ and $s'$ are two adjacent segments where $s'$ is after $s$.
		\If {{\color{blue}$s.count < s'.count \wedge (s.count + s'.count \geq -6 \wedge s.count <0)$}}
		\State Merge($s,s'$)
		\ElsIf{$s.count = s'.count =$ {$\color{blue}-1, -2, -3, \text{ or } -6$}}
		\If{$s \equiv s'$}
		\State DetectTermination()
		\ElsIf{{\color{blue}$s < s' \wedge (s.count + s'.count \geq -6)$}}
		\State Merge($s,s'$)
		\EndIf
		\EndIf
		\EndFunction
		\Statex
	\end{algorithmic}
\end{algorithm}

\begin{theorem}\label{the:boundary-detection}
When executed by contracted boundary particles, procedure $\mathtt{Boundary-Detection}$ terminates in $O(L^2_{\max})$ rounds resulting in each boundary node $b$ knowing, for each boundary $B$ it is on, whether $B$ is an inner or outer boundary. If $b$ has $seg\_head = true$ w.r.t.\ boundary $B$, then $b$ knows how many nodes $k$, $k \in \{1,2,3,6\}$, are also segment heads w.r.t.\ $B$.
\end{theorem}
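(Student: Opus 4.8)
The plan is to analyse $\mathtt{Boundary-Detection}$ through one topological invariant on boundary counts together with a progress/soundness analysis of the stretch-expansion dynamics. I would first record the topological fact that the boundary counts of the nodes of any outer boundary sum to $+6$, while those of any inner boundary sum to $-6$ (a discrete turning-number statement, proved by induction on the boundary length, since each unit of boundary count is a $60^\circ$ exterior-angle turn). As $\mathtt{Merge}(s,s')$ replaces two adjacent segments by one whose node set is their concatenation and whose count is $s.count+s'.count$, this gives the running invariant that for each boundary $B$ the segments maintained by $\mathtt{Outer-Stretch-Expansion}$ always partition the nodes of $B$ into contiguous arcs whose counts sum to the topological total of $B$, and likewise for those maintained by $\mathtt{Inner-Stretch-Expansion}$. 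A second, immediate invariant is that the count of a segment never exceeds $6$ in absolute value: this is exactly what the blue conjunct ``$\wedge\,(s.count+s'.count\le 6)$'' (resp.\ ``$\ge-6$'') protects against, and it is the reason the edge case of Figure~\ref{fig:stretch-expansion-edge-case}, where two adjacent segments both have count $6$, must not merge.

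I would then prove a \emph{progress lemma}: on a boundary with topological total $+6$, unless the $\mathtt{Outer-Stretch-Expansion}$ configuration already consists of $k$ pairwise lexicographically equal segments each of count $6/k$ for some $k\in\{1,2,3,6\}$, some adjacent pair $(s,s')$ has an enabled $\mathtt{Merge}$. The key case is when a segment has non-positive count: take a maximal cyclic run of such segments; its predecessor $s$ has $0<s.count\le 6$ and the run's first member $s'$ has $s'.count\le 0$, so $s.count>s'.count$ and $s.count+s'.count\le s.count\le 6$, and the first merge rule fires. If every count is positive, then since the counts are positive integers summing to $6$ and bounded by $6$, either a strict descent $s.count>s'.count$ with $s.count+s'.count\le 6$ exists (a merge), or all counts equal $6/k$ with $k\in\{1,2,3,6\}$ and then either all segments are lexicographically equal --- the terminal form --- or some adjacent equal-count pair has $s>s'$ and triggers the tie-case merge. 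The symmetric statement holds for $\mathtt{Inner-Stretch-Expansion}$ when the topological total is $-6$. Since each $\mathtt{Merge}$ strictly decreases the number of segments on its boundary, each subroutine performs at most $L_B-1\le L_{\max}-1$ merges on $B$ and then reaches a configuration admitting no $\mathtt{Merge}$; by the progress lemma, for the subroutine whose target total ($+6$ or $-6$) equals the topological total of $B$ this configuration must be the terminal form, while the other subroutine just halts in a non-terminal configuration.

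Next I would establish soundness of termination detection. A node of a segment $s$ halts on $B$ only after $s$'s head has collected $6/|s.count|$ acknowledgements from its $6/|s.count|$ predecessor segments, each acknowledgement sent only after that predecessor's $\mathtt{DetectTermination}$ returned true --- which forces \emph{its} $6/|s.count|$ predecessors to be lexicographically equal to it and of the same count (the ``counts unchanged during the subroutine'' guard makes each such check atomic despite asynchrony). Chaining this around the cycle forces every segment of that subroutine's partition of $B$ to carry a common count $\pm v$ and to be pairwise lexicographically equal, whence $m\cdot v=6$ with $m\in\{1,2,3,6\}$ and the segment counts sum to $\pm 6$; matching this against the topological total of $B$ shows that exactly one of the two subroutines can ever complete on $B$, namely the one that corresponds to whether $B$ is outer or inner, and that on completion every seg-head knows $k=6/|s.count|\in\{1,2,3,6\}$. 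Conversely, by the progress lemma the matching subroutine does reach its terminal form, where every segment's $\mathtt{DetectTermination}$ returns true and the acknowledgement phase completes, so every node on $B$ halts and learns the type of $B$.

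Finally, the round bound. A single $\mathtt{Merge}$ attempt or $\mathtt{DetectTermination}$ check passes tokens along segments of length $O(L_{\max})$ and across at most six consecutive segments, so it finishes in $O(L_{\max})$ rounds; and if such an attempt aborts it is because a neighbouring segment's count changed, i.e.\ because some $\mathtt{Merge}$ occurred on that same boundary. Hence, by the progress lemma, within every window of $O(L_{\max})$ rounds each not-yet-terminated boundary executes a $\mathtt{Merge}$ (in one of its subroutines) or reaches its terminal form; since distinct boundaries use disjoint per-boundary memory at each particle these windows accrue in parallel, so after $O(L_{\max})$ windows every boundary has terminated, and the acknowledgement phase adds another $O(L_{\max})$ rounds, for $O(L_{\max}^2)$ in total. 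I expect the soundness step to be the main obstacle --- in particular ruling out a spurious completion of the ``wrong'' subroutine on a boundary, which is where the topological sum invariant must be combined with the rigid structure that the acknowledgement phase demands; the bounded-count invariant, and the Figure~\ref{fig:stretch-expansion-edge-case} fix that preserves it, is the other point that needs care.
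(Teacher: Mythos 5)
Your proposal is correct and follows essentially the same route as the paper's proof: the count-sum invariant of $\pm 6$ per boundary, the bound $|s.count|\le 6$ on segment counts, a progress/no-deadlock argument guaranteeing merges until $k\in\{1,2,3,6\}$ lexicographically equal segments of count $6/k$ (resp.\ $-6/k$) remain, and the bound of $O(L_{\max})$ merges each costing $O(L_{\max})$ rounds for the $O(L_{\max}^2)$ total. Your write-up is simply more explicit than the paper's terse argument (in particular the cased progress lemma and the observation that only the subroutine whose target total matches the boundary's topological total can pass $\mathtt{DetectTermination()}$), not a genuinely different approach.
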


\begin{proof}
We first prove that every node on a boundary $B$ identifies $B$ correctly as either an outer or inner boundary and that the procedure terminates. The case of the outer boundary is analyzed and it is noted that the analysis for an inner boundary is similar.

 It is easy to see that any segment with count $0$ or $-1$ will not initiate $\mathtt{Merge()}$. Note that deadlocks, i.e. a situation where each segment tries to execute $\mathtt{Merge()}$ with its next segment, are avoided when at least one segment has count $0$ or $-1$. Thus, eventually, only segments with positive counts will remain. It is easy to see that the sum of the boundary counts of all outer boundary nodes is $6$. Thus, there will be at most $6$ segments finally covering the boundary nodes, each with count $\geq 1$. Due to the checks in place in $\mathtt{Outer-Stretch-Expansion}$, the count of a segment never exceeds $6$. 

If the boundary is covered by several positive count segments which are not lexicographically equal, then $\mathtt{DetectTermination()}$ will not return $true$. Instead, the segments will continue to run $\mathtt{Outer-Stretch-Expansion}$. Deadlock w.r.t. $\mathtt{Merge()}$ operations is avoided in this case because there will always exist a segment that is lexicographically less than both its previous and next segments. 

$\mathtt{DetectTermination()}$ run by a segment $s$ with count  $i$ returns $true$ when the next $6/|i|$ segments after $s$ are lexicographically equal to $s$. Thus, the number of segments that finally cover the boundary is a divisor of $6$, i.e. $1,2,3,$ or $6$. Furthermore, each head of a segment with count $i$ has $seg\_head = true$ w.r.t.\ the boundary and knows the number of segments as $6/|i|$.

One instance of $\mathtt{Outer-Stretch-Expansion}$, run on two segments $s$ and $s'$ of lengths $\ell$ and $\ell'$ respectively, takes $O(\min \lbrace\ell,\ell'\rbrace)$ rounds. The length of any segment is upper bounded by $L_{\max}$. Thus the running time of $\mathtt{Outer-Stretch-Expansion}$ is at most $O(L_{\max})$ rounds. Similarly for $\mathtt{Inner-Stretch-Expansion}$, its max running time is $O(L_{\max})$ rounds. One instance of lexicographic comparison or one instance of $\mathtt{Merge(s,s')}$ of two adjacent segments also takes $O(L_{\max})$ rounds. $\mathtt{DetectTermination()}$ compares one segment with at most 6 other segments lexicographically and thus takes $O(L_{\max})$ rounds as well.

Now, the running time of $\mathtt{Boundary-Detection}$ is upper bounded by considering the time taken to merge all nodes on the boundary into one segment and subsequently run $\mathtt{DetectTermination()}$ and subsequently terminate execution. Either $\mathtt{Outer-Stretch-Expansion}$ or $\mathtt{Inner-Stretch-Expansion}$ is called $O(L_{\max})$ times, each instance taking $O(L_{\max})$ rounds resulting in a running time of $O(L^2_{\max})$ rounds.
\end{proof}


\subsection{Leader Election on a Convex Polygon without Sharp Vertices}
\label{subsec:LE-convex-polygon}
 
Procedure $\mathtt{Convex-Polygon-Leader-Election}$ relies on three subroutines, which are described below. Note that the outer boundary nodes of a convex polygon without sharp vertices form a hexagon in the grid. Let $b$ be a vertex, occupied by particle $P$, with successor node $d$ w.r.t.\ the outer boundary. Define \emph{$P$'s side} as the side of the hexagon containing $b$ and $d$. 

Let LSLS stand for largest same length sides and SSLS stand for smallest same length sides. Every possible hexagon is isomorphic to one of the following four. See Figures~\ref{fig:cat1-hexagon-1ssls}-\ref{fig:cat4-hexagon}.
\begin{enumerate}
\item Category 1: The hexagon has exactly either 1 LSLS or 1 SSLS.
\item Category 2: The hexagon has either 2 LSLS and 4 SSLS, 4 LSLS and 2 SSLS, or 2 LSLS, 2 SSLS, and 2 other same length sides.
\item Category 3: The hexagon has exactly 3 LSLS and 3 SSLS.
\item Category 4: The hexagon has 6 sides of the same length.
\end{enumerate}

\begin{figure}
	\includegraphics[page=6,height=2.2in]{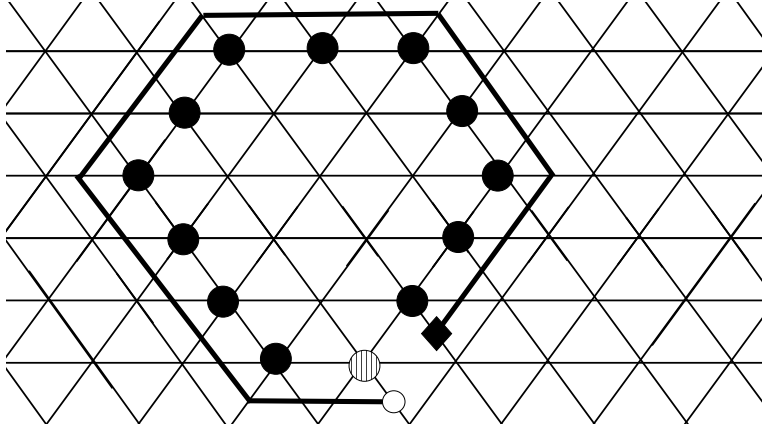}
	\caption{Category 1 hexagon of CW particles with 1 SSLS. Exactly one seg-head, denoted by patterned particle. Non-seg-head particles are filled. Empty circle denotes head of segment. Filled diamond denotes end of segment.} \label{fig:cat1-hexagon-1ssls}
\end{figure} 

\begin{figure}
	\includegraphics[page=7,height=2.2in]{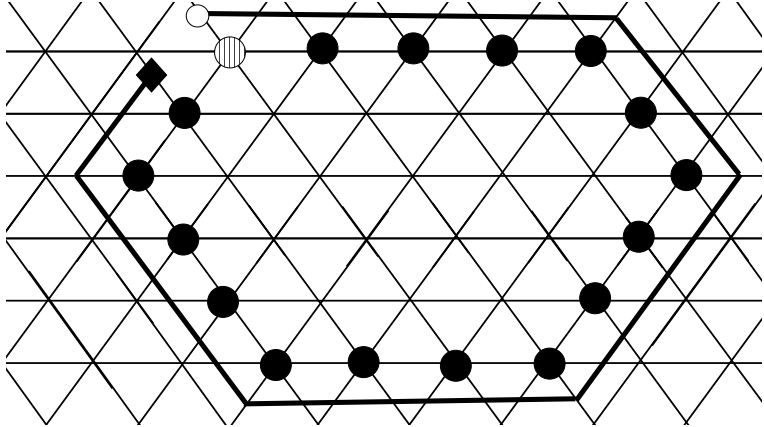}
	\caption{Category 1 hexagon of CW particles with 1 LSLS and exactly one seg-head.} \label{fig:cat1-hexagon-1lsls}
\end{figure}  

\begin{figure}
\begin{subfigure}{.5\textwidth}
  \centering
  \includegraphics[page=9,width=.8\linewidth]{pics.pdf}
  \caption{Initially.}
  \label{fig:cat2-hexagon-2lsls-4ssls-before}
\end{subfigure}%
\begin{subfigure}{.5\textwidth}
  \centering
  \includegraphics[page=10,width=.8\linewidth]{pics.pdf}
  \caption{After running $\mathtt{Mid-Line}$.}
  \label{fig:cat2-hexagon-2lsls-4ssls-aftermidline}
\end{subfigure}
\caption{Category 2 hexagon of CW particles with 2 LSLS and 4 SSLS and two seg-heads. Nodes $b_1$, $b_2$, $c_1$, and $c_2$ are occupied by particles $P_1$, $P_2$, $Q_1$, and $Q_2$ respectively and are marked.}
\label{fig:cat2-hexagon-2lsls-4ssls}
\end{figure}

\begin{figure}
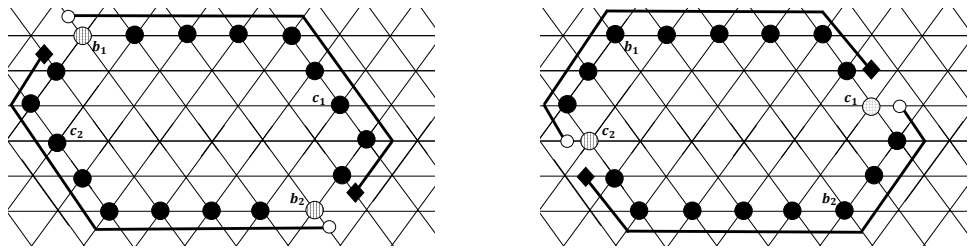

\begin{subfigure}{.5\textwidth}
  \centering
  \includegraphics[page=11,width=.8\linewidth]{pics.pdf}
  \caption{Initially.}
  \label{fig:cat2-hexagon-2lsls-2ssls-before}
\end{subfigure}%
\begin{subfigure}{.5\textwidth}
  \centering
  \includegraphics[page=12,width=.8\linewidth]{pics.pdf}
  \caption{After running $\mathtt{Mid-Line}$.}
  \label{fig:cat2-hexagon-2lsls-2ssls-aftermidline}
\end{subfigure}
\caption{Category 2 hexagon of CW particles with 2 LSLS, 2 SSLS, and 2 other same length sides and two seg-heads. Nodes $b_1$, $b_2$, $c_1$, and $c_2$ are occupied by particles $P_1$, $P_2$, $Q_1$, and $Q_2$ respectively and are marked.}
\label{fig:cat2-hexagon-2lsls-2ssls}
\end{figure}

\begin{figure}
	\includegraphics[page=13,height=2.2in]{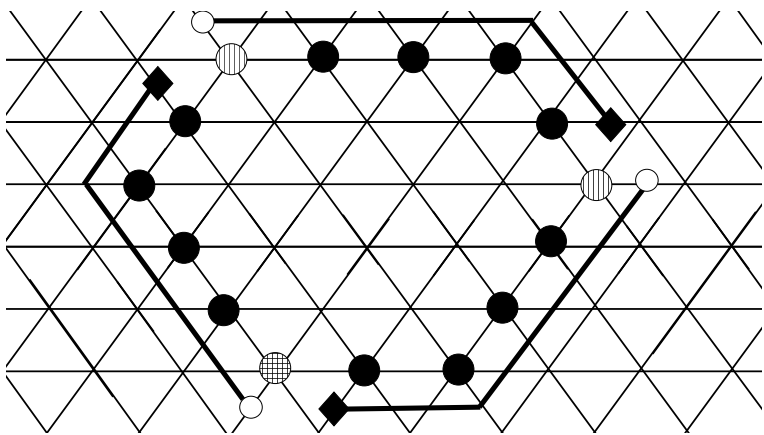}
	\caption{Category 3 hexagon of CW particles with three seg-heads.} \label{fig:cat3-hexagon}
\end{figure} 

\begin{figure}
	\includegraphics[page=8,height=2.2in]{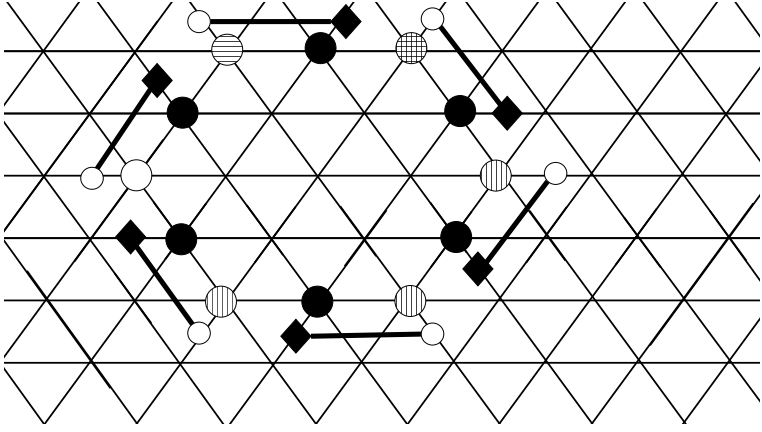}
	\caption{Category 4 hexagon of CW particles with six seg-heads.} \label{fig:cat4-hexagon}
\end{figure} 

Let $P_0, P_1, \ldots P_{k-1}$ be the sequence of particles with $seg\_head = true$ on the outer boundary such that $P_{(i+1) \mod k}$ is the next seg-head after $P_i$. Subroutine $\mathtt{Compare-Length(x)}$ is initiated by a seg-head $P_i$ to compare the length of $P_i$'s segment with that of $P_{(i+x) \mod k}$'s segment.\footnote{This is similar to the lexicographic comparison of two segments in~\cite{BB18}. However, in lexicographic comparison, unlike in $\mathtt{Compare-Length(x)}$, the boundary count of each node is also used for comparison.}  The procedure simulates the way a Turing machine would perform a similar task, where the segments would be segments of the machine's tape (refer to \cite{DFSVY18} for an example). 
Note that $P_i$ specifies messages for $P_{(i+x) \mod k}$ by encoding $x$ into the message. Each $P_j$, $i \leq j \leq (i+x) \mod k$ can increment a counter, also embedded in the message, until the destination particle is reached. When $k$ is a constant, the size of these encodings is only a constant number of bits. The following lemma captures the running time of the subroutine.

\begin{lemma}\label{lem:running-time-compare-length}
	When $x$ is a constant and there are $L$ nodes on the outer boundary, if the nodes from $P_i$'s segment's head to $P_{(i+x) \mod k}$'s segment's tail run $\mathtt{Compare-Length(x)}$, then the subroutine terminates in $O(L^2)$ rounds, resulting in $P_i$ knowing the size comparison between the two segments.
\end{lemma}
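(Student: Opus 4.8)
The plan is to view $\mathtt{Compare-Length(x)}$ as a faithful implementation of the standard ``zig-zag'' comparison of two Turing-machine tape segments (the technique of~\cite{DFSVY18} referenced in the text) and then to charge its cost in asynchronous rounds. Because all participating particles share a chirality and the participating nodes are exactly the contiguous arc of the outer boundary running from the head of $P_i$'s segment to the tail of $P_{(i+x)\bmod k}$'s segment, the notions of successor and predecessor are globally consistent along this arc, and the arc contains at most $L$ nodes. I would have $P_i$ launch a single control token that: walks forward to the first still-``unchecked'' node of $P_i$'s segment and marks it, then walks forward through the at most $x-1$ intermediate segments to the first unchecked node of $P_{(i+x)\bmod k}$'s segment and marks it, then walks back, and repeats. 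The only state the token carries is the current phase, the constant offset $x$ (so it knows which seg-head it is heading to), and a constant-size counter used to step past the intermediate segments --- all of which fit in $O(1)$ bits since $x$ and $k$ are constants; the ``checked/unchecked'' information is stored in place, one bit per node, so the constant-memory constraint is respected.

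The first key step is to bound one \emph{pass} --- a traversal of the token from one segment to the other together with the constant-time marking at the endpoints --- by $O(L)$ asynchronous rounds. This follows from the standard observation that under the asynchronous scheduler a token relayed along a path of $\ell$ particles advances by at least one hop in every asynchronous round (each particle on the path is activated at least once per round and forwards the token when activated), so it reaches the far end in $O(\ell)$ rounds; here $\ell \le L$, and ``scan forward to the first unchecked node'' is likewise an $O(L)$ walk. The second key step is to bound the number of passes: after $j$ completed forward-and-back marking cycles exactly $j$ nodes of each of the two target segments are checked, so the process halts as soon as one of the two segments is exhausted, i.e.\ after at most $\min(\ell_1,\ell_2)+O(1)\le L+O(1)$ marking cycles, hence $O(L)$ passes. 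Multiplying, the comparison is decided within $O(L)\cdot O(L)=O(L^2)$ asynchronous rounds. Which of the three events ``$P_i$'s segment exhausted first'', ``the other segment exhausted first'', or ``both exhausted simultaneously'' occurs determines the verdict, and one final message carrying this $O(1)$-bit verdict is relayed back along the same arc to $P_i$ in a further $O(L)$ rounds, absorbed into the $O(L^2)$ bound.

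I expect the main obstacle to be the bookkeeping forced by the asynchronous scheduler rather than any genuine mathematical difficulty: one must ensure that exactly one token is in play at all times, so that ``one hop per round'' is a correct lower bound on progress and there is no race between several tokens; that the in-place ``checked'' marks on a node are never clobbered by actions of the other segments that node may simultaneously belong to --- precisely the cross-segment interference flagged in Section~\ref{sec:building-blocks}, which the surrounding algorithm is responsible for preventing; and that ``the first unchecked node of a segment'' is locally detectable, which it is, since a segment's head is flagged and the token simply scans forward from it, a cost already included in the per-pass bound. Once these points are settled, correctness of the verdict is immediate from the semantics of the zig-zag comparison, and the running time is the product computed above.
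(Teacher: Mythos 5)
Your proof is correct and follows essentially the same route as the paper's: both arguments bound a single comparison pass between the two segments by $O(L)$ rounds (the message/token traverses at most $L$ boundary nodes since $x$ is a constant) and bound the number of such passes by the segment length, at most $L$, giving the $O(L^2)$ total. The extra bookkeeping you supply (single token, in-place marks, final verdict relay) is a more detailed rendering of the same Turing-machine-style comparison the paper invokes, not a different argument.
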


\begin{proof}
	The correctness is trivial. Regarding time complexity, a message from $P_i$ to $P_{(i+x) \mod k}$ or vice-versa travels through at most $O(L)$ nodes since $x$ is a constant, thus taking $O(L)$ rounds. The length of either segment is at most $L$, so at most $L$ comparisons must be made. Thus it takes $O(L^2)$ rounds for $\mathtt{Compare-Length(x)}$ to terminate.
\end{proof}

Consider two parallel lines $M_1$ and $M_2$ of the grid. The \emph{mid-line(s)} between $M_1$ and $M_2$ is the line(s) parallel to both $M_1$ and $M_2$ which is either equidistant from both $M_1$ and $M_2$ or not closer to one of the lines by more than a unit distance. Consider a category 2 hexagon where opposite outer boundary vertices $b_1$ and $b_2$, occupied by particles $P_1$ and $P_2$ respectively, have $seg\_head = true$ and the remaining nodes have $seg\_head = false$. There exist either 1 or 2 mid-lines between $P_1$'s side and $P_2$'s side. Let $c_1$ and $c_2$, occupied by particles $Q_1$ and $Q_2$ respectively, be nodes on $P_1$ and $P_2$'s segments respectively lying on the mid-line (or on the closer mid-line in the case of 2 mid-lines). The outer boundary particles run subroutine $\mathtt{Mid-Line}$ to find $c_1$ and $c_2$ and subsequently $Q_1$ and $Q_2$ set $seg\_head = true$ and $P_1$ and $P_2$ set $seg\_head = false$. See Figures~\ref{fig:cat2-hexagon-2lsls-4ssls} and~\ref{fig:cat2-hexagon-2lsls-2ssls} for examples.

The subroutine works as follows. Consider $P_1$'s segment (the process is similar for $P_2$'s segment). Particle $P_1$ sends a message along its segment telling all nodes from the next vertex in its segment, $d_1$, to the tail of the segment, $d_2$, to mark themselves. Once $P_1$ receives an acknowledgement that this is done, it sends a message to $d_1$ instructing it do the following. The node $d_1$ unmarks itself and sends a message to $d_2$. Node $d_2$ then unmarks itself and sends a message through its predecessors to the furthest marked node. The previous process is repeated until exactly one node $c_1$ is left marked. This node sets $seg\_head = true$ and sends a message to $b_1$ to set $b_1$'s $seg\_head$ to $false$. Finally, $c_1$ sends a termination message to $c_2$. Once $c_1$ receives a termination message from $c_2$, $Q_1$ informs all nodes in its segment to terminate execution of the subroutine. The following observation captures the running time of $\mathtt{Mid-Line}$.

\begin{observation}\label{obs:running-time-mid-line}
	Let there be $L$ outer boundary nodes on a category 2 hexagon with opposite vertices $b_1$ and $b_2$, occupied by particles $P_1$ and $P_2$ respectively, with $seg\_head = true$ and remaining nodes with $seg\_head = false$. Subroutine $\mathtt{Mid-Line}$, run by the $L$ nodes, terminates in $O(L^2)$ rounds, such that nodes $c_1$ and $c_2$, which are the closest nodes in $P_1$ and $P_2$'s segments lying on mid-lines between $P_1$'s side and $P_2$'s side respectively, now have $seg\_head = true$ and $b_1$ and $b_2$ have $seg\_head = false$.
\end{observation}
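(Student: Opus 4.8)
The plan is to separate the argument into a correctness part --- showing that after $\mathtt{Mid-Line}$ the two seg-heads are exactly the nodes $c_1,c_2$ with the stated geometric meaning, that $b_1,b_2$ are no longer seg-heads, and that every participating node has terminated the subroutine --- and a running-time part bounding the number of asynchronous rounds by $O(L^2)$. Since $P_1$'s segment and $P_2$'s segment are treated symmetrically and are coupled only through the final termination handshake, it suffices to analyze one of them, say $P_1$'s.

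For correctness I would first pin down the geometric fact that makes the trimming loop meaningful. On a category~2 hexagon with $b_1,b_2$ opposite vertices, $P_1$'s segment traverses exactly the three consecutive hexagon sides from $b_1$ up to (but not including) $b_2$; the first of these is $P_1$'s side, carried by a grid line $M_1$, and the opposite side of the hexagon, carried by a parallel grid line $M_2$, is $P_2$'s side, so the mid-line(s) of the observation are the grid line(s) parallel to $M_1$ and $M_2$ at (or within one inter-line spacing of) their midpoint. Walking along $P_1$'s segment from its next vertex $d_1$ to its tail $d_2$, every step lies on one of the two hexagon sides following $P_1$'s side, each of which meets $M_1$ at $60^\circ$; hence each unit step changes the perpendicular distance to $M_1$ by the same amount, namely the spacing between consecutive grid lines parallel to $M_1$. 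Consequently the marked nodes initially laid out by $P_1$, i.e.\ $d_1,\dots,d_2$, form a contiguous lattice path on which the distance to $M_1$ is affine in the index, so the mid-line node is the central node of this path --- unique when the path length is odd, with two adjacent candidates when it is even, mirroring the one-or-two-mid-lines dichotomy.

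Next I would analyze the trimming loop by an invariant argument. Writing the initially marked nodes as $x_0,\dots,x_r$ in segment order, the invariant is that after the $j$-th iteration the still-marked set is exactly $\{x_j,\dots,x_{r-j}\}$: the base case is the initialization, and the inductive step is read off the loop body, where the current front $d_1$ unmarks and relays to the tail, the tail unmarks and relays back through predecessors to the furthest still-marked node (the new front), so exactly one node leaves from each end. The loop halts with the single node $c_1$ lying on the mid-line --- on the one nearer $P_1$'s side when there are two --- the end-of-loop bookkeeping (unmark only the frontmost node when two would otherwise remain) resolving the parity ambiguity; the symmetric run on $P_2$'s segment produces $c_2$. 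The explicit writes at the end of the subroutine ($c_1$ setting $seg\_head := true$ and messaging $b_1$ to set $seg\_head := false$, and likewise for $c_2,b_2$), together with the termination handshake between $c_1$ and $c_2$ and the subsequent intra-segment broadcasts of $Q_1$ and $Q_2$, then give the stated postcondition and guarantee that all $L$ participating nodes leave the subroutine. I would also note that the ``marked'' bit used here is private to this invocation of $\mathtt{Mid-Line}$ on this boundary, so it cannot clash with flags used by concurrently running segments.

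For the running time I would charge each primitive the obvious cost: in the asynchronous-round model every particle is activated at least once per round, so a message relayed hop by hop along $t$ consecutive segment nodes is delivered within $O(t)$ rounds, and every segment here has at most $L$ nodes; thus each primitive used --- the opening ``mark all'', every ``unmark and relay to the tail'', every backward scan to the furthest marked node, the message to $b_1$, and the $c_1$, $c_2$ handshake --- costs $O(L)$ rounds. The loop performs $O(L)$ iterations, since it deletes two marked nodes per iteration out of at most $L$, and each iteration is a constant number of $O(L)$-round relays; adding the $O(L)$-round finalization yields $O(L^2)$ rounds overall, in line with the bound of Lemma~\ref{lem:running-time-compare-length} for the analogous length-comparison routine, and because the relays are serialized by acknowledgements this count is independent of the scheduler's interleaving. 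I expect the correctness side to be the main obstacle rather than the timing: precisely identifying which node is ``the'' mid-line node versus the two candidates, verifying that the initialization really centers the marked path on it, and handling the parity bookkeeping of the loop (including the corner case where two marked nodes remain); the hop-counting is routine once the per-hop convention is fixed.
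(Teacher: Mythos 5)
Your argument is correct, and it actually supplies more than the paper does: the paper states this as an Observation immediately after the prose description of $\mathtt{Mid-Line}$ and gives no proof at all, so your two-part expansion --- the geometric fact that along the marked portion of $P_1$'s segment the perpendicular distance to the line $M_1$ carrying $P_1$'s side grows by exactly one grid-line spacing per step (both intermediate sides meet $M_1$ at $60^\circ$), plus the invariant that the alternating trim removes one marked node from each end per iteration --- is precisely the justification the paper leaves implicit, and your round accounting ($O(L)$ rounds per relay, $O(L)$ trim iterations, $O(L)$ finalization) matches the intended $O(L^2)$ bound. One sentence should be reworded, though it does not affect your conclusions: the parity correspondence in your geometric paragraph is inverted. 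The marked path has exactly one node at each distance $0,1,\dots,D-1$ from $M_1$, where $D$ is the $M_1$--$M_2$ distance, so an \emph{odd}-length marked path ($D$ odd) is exactly the case with \emph{two} mid-lines (at distances $(D-1)/2$ and $(D+1)/2$), while an \emph{even}-length path ($D$ even) has the \emph{unique} mid-line at $D/2$ --- the opposite of the ``mirroring'' you assert. Your loop analysis nonetheless lands on the right node in both cases: for $D$ odd the trim ends naturally at index $(D-1)/2$, the mid-line nearer $P_1$'s side (as the statement requires), and for $D$ even the final front-only unmark leaves index $D/2$, the unique mid-line node; so only that connecting sentence needs fixing, not the proof.
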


Intuitively, when the segment heads are on the mid-line as promised by Observation 
 \ref{obs:running-time-mid-line}, if they move towards the center, they can get next to each other and elect one of them as a leader. The final subroutine, $\mathtt{Snake-Movement(D,x)}$, is used for election in hexagons of several types.
Consider a path $p$ of $w$ nodes occupied by contracted particles with head node $b$ occupied by particle $P$ and tail node $c$. $P$ has $seg\_head = true$ w.r.t.\ the outer boundary and the remaining particles have $seg\_head = false$. See Figure~\ref{fig:snake-example} for an example of a path. The subroutine is run by the particles in $p$ to expand path $p$ (termed \emph{snake $p$}) $x$ nodes in direction $D$ through $b$ without breaking $p$'s connectivity and while keeping the tail node of $p$ fixed at $c$. Here $x$ is restricted to $x \leq w$. The particles of $p$ move in a series of expansions and contractions in the direction $D$ through $b$ while using the particles of $p$ like a Turing machine to maintain a counter $y$ of how many nodes remain to be moved through. 

\begin{figure}
	\includegraphics[page=34,height=2.2in]{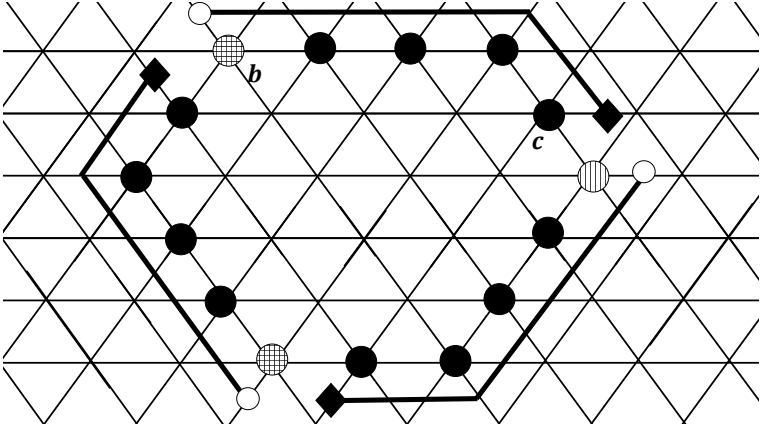}
	\caption{Example of a path $p$ of $5$ nodes with head $b$ and tail $c$. Particle $P$ occupies $b$ and has $seg\_head = true$.} \label{fig:snake-example}
\end{figure} 

$P$ sends a message throughout $p$ indicating each node $b$ should maintain parent and child pointers to $b$'s predecessor and successor in $p$, respectively.\footnote{Note that if $P$ occupies two nodes, $P$'s parent is the predecessor of its head and $P$'s child is the successor of its tail.} $P$'s parent pointer and $c$'s child pointer are both set to $\bot$. Furthermore, $P$ sets flag $head\_of\_snake = true$. $P$ initializes $y$ to $x$ and repeats the following until $y=0$. If $P$ is contracted, it expands in the direction $D$ and decrements $y$. If $P$ is expanded, $P$ checks if it can pull the particle $Q$ occupying its child as follows. If $Q$ is contracted, $P$ pulls $Q$. Else $P$ sends a message to $Q$ to pull the particle at $Q$'s child. This process is recursively performed in $p$ until some particle in $p$ pulls the particle at its child. 

Furthermore, if $P$ wishes to expand in a given direction and the target node is already occupied by another particle $R$, then $P$ decrements $y$, passes the values of $D$ and $y$ to $R$, and then sets $head\_of\_snake = false$ and $P$'s parent pointer to the node occupied by $R$. $R$ sets its parent pointer to $\bot$, sets its child pointer to the node occupied by $P$, and sets $head\_of\_snake = true$. Subsequent movements of $p$ are led by $R$. The running time of the procedure is captured by the following lemma and a proof sketch is given as the proof is straightforward.

Note that only a segment on the outer boundary can perform procedure $\mathtt{Snake-Movement(D,x)}$. Hence, the instruction to move is not contradicted by an instruction of another segment that snake $p$ particles may participate in.

\begin{lemma}\label{lem:running-time-snake-movement}
Assume that $L$ contracted particles of a snake run $\mathtt{Snake-Movement(D,x)}$, where $x \leq L$. Then, the subroutine terminates in $O(x^2)$ rounds without breaking connectivity. On termination, either the snake head reached a node at distance $x$ away from the head of the snake in direction $D$, or the next particle in direction $D$ belongs to another snake.
\end{lemma}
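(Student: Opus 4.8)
The plan is to establish the three claims --- connectivity is never broken, the subroutine terminates, and it does so within $O(x^2)$ rounds --- by first setting up a structural invariant on the snake and then charging the running time against the head's advances. The invariant I would maintain is: at every moment the occupied nodes of $p$, linked by the parent/child pointers, form a connected path whose unique head is the particle with $head\_of\_snake = true$ and whose tail node is permanently $c$. Each atomic operation preserves this. A head expansion moves the head into an adjacent \emph{unoccupied} node, appending one node to the path and leaving every other occupied node in place. A pull --- a particle contracting into its head node while forcing the particle at its child to expand into the just-vacated node --- leaves the occupied-node set unchanged and merely moves a unit of ``slack'' one step toward the tail; the cascade that forwards a pull request down a run of consecutive expanded particles ends with exactly one such pull. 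The head-transfer step, triggered when the head would advance into an occupied node, only rewires pointers and designates the new head. Consequently no path particle ever vacates a node without a neighbouring path particle occupying it in the same step (the head being the only particle that ever enters a fresh node), so the footprint of $p$ only grows; since the rest of the shape is untouched, connectivity of $p$ --- and of the whole shape --- is never broken. In particular, when a cascade reaches the tail particle it only expands that particle so that it covers $c$ together with one new node, so the tail never leaves $c$.

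For termination and the final position of the head I would track the counter $y$, which the head initializes to $x$ and decrements by one whenever it advances (whether by expanding into an empty node or by absorbing a free particle there); the subroutine halts precisely when $y = 0$. As long as the node immediately ahead of the head is never occupied by a particle of \emph{another} snake, $y$ must reach $0$: once expanded, the head can always become contracted again after a bounded delay, because issuing a pull request walks down the maximal run of consecutive expanded particles behind it and is guaranteed either to find a contracted particle to pull or to reach the tail particle, which then expands --- and since $x \le L$ there are always enough particles in $p$ to absorb all the slack created. Once contracted, the head advances, decrementing $y$, unless the node ahead is held by another snake, which is exactly the second alternative in the statement. Since $y$ never increases and each advance decrements it, after at most $x$ advances we have $y = 0$ and the head occupies the node at distance $x$ from the original head along direction $D$.

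For the $O(x^2)$ bound I would argue as follows. The number of expanded particles of $p$ is always at most $x$: each decrement of $y$ creates at most one new expanded particle, while a pull conserves their number (the puller goes from expanded to contracted exactly as its child goes from contracted to expanded), so the count never exceeds the total number of decrements of $y$, namely $x$. Hence the run of consecutive expanded particles behind the head always has length $O(x)$, so one pull cascade --- each particle in the run forwarding the request within $O(1)$ of its activations, with the resulting contraction propagating back up to the head --- finishes in $O(x)$ rounds and frees the head to advance again. Charging $O(x)$ rounds to each of the at most $x$ advances (plus $O(1)$ for the advance itself) gives $O(x^2)$ rounds overall. The main obstacle, and the point demanding the most care, is precisely this last step: one has to pin down the exact behaviour of a pull cascade on a run of several consecutive expanded particles --- how many rounds it costs and in what state it leaves the run --- since a careless analysis that re-issues a separate cascade for each unit of slack would only yield $O(x^3)$; the claim to nail down is that a single cascade shifts the whole expanded run one step toward the tail in $O(x)$ rounds.
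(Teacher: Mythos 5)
Your proposal is correct and follows essentially the same route as the paper's proof sketch: maintain that the snake stays a connected path with fixed tail, observe that each head advance adds at most one expanded particle so the expanded run has length at most $x$, and charge one $O(x)$-round pull cascade (plus the $O(x)$-round distributed counter update, which your per-advance budget absorbs) to each of the at most $x$ advances, giving $O(x^2)$ rounds; the paper phrases this as stages costing $\sum_{i=0}^{x} O(i+x)$, which is the same accounting. Your treatment of the head-transfer case and the termination alternative also matches the paper's remark that an occupied target node only shortens the run.
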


\begin{proof}[Proof Sketch]
Consider the sequence of particles in the snake, starting from the head of the snake, consisting of only expanded particles until the first contracted particle is found. Let this sequence consist of $k$ expanded particles $P_1$ to $P_k$. In order for the particle occupying the head of the snake to expand to the next node, a series of contractions must first occur, starting with $P_k$ contracting and pulling. Subsequently $P_{k-1}$ contracts and pulls $P_k$ and so on until at last $P_1$ contracts. Then $P_1$ is free to expand. This sequence of pulls takes at most $k$ rounds and one additional round for $P_1$ to expand. After $P_1$ expands, it takes $O(x)$ rounds to decrement the counter $y$ by one. Initially, it took $O(x)$ rounds to check if $y=0$ and $k$ rounds for the head to transmit a message that it wants to expand to $P_k$, thus starting this chain of contractions. 

After such a sequence of pulls and an expansion, the length of the sequence of expanded particles increases from $k$ to $k+1$. Define the series of rounds taken to increase the length of this sequence by 1 as one stage. Now, there are at most $x$ stages, for a total running time of $\sum_{i=0}^{x} O(i + x) = O(x^2)$ rounds. Note that during $\mathtt{Snake-Movement(D,x)}$, it is possible for the head to want to expand into an already occupied node. This would only reduce the running time, as this means that we do not need to expand the original head and possibly save several rounds of expansions and contractions within the snake.

We are careful when calling the procedure to ensure that no particle is shared between 2 or more snakes. Thus the particles in a snake will not perform contrasting movements. Therefore, the correctness of the lemma is trivial.
\end{proof}

Now the procedure $\mathtt{Convex-Polygon-Leader-Election}$ is described. Initially, the $6$ particles that occupy vertices on the outer boundary set $seg\_head = true$ while the remaining particles in the polygon set $seg\_head = false$. Each of these $6$ particles initiates $\mathtt{Compare-Length(x)}$ for $1 \leq x \leq 6$, sends messages to the remaining $5$ particles with the results of these comparisons, and determines which category hexagon it lies on.\footnote{With this information, a particle can compute, using a constant amount of space, the total order on the lengths of sides of the hexagon. Combined with the information of which sides are equal in length, a particle can determine both the category of the hexagon it lies on and the type of its side.} The procedure follows one of the following four cases:
\begin{enumerate}
\item \textit{Category 1 hexagon:} Let $P$ occupy a vertex such that $P$'s side is the smallest or largest side depending on the hexagon. The remaining $5$ vertices set $seg\_head = false$ and $P$ becomes the leader, as seen in Figure~\ref{fig:cat1-hexagon-1ssls} and Figure~\ref{fig:cat1-hexagon-1lsls}.

\item \textit{Category 2 hexagon:} If there are exactly 2 LSLS, then those sides' vertices  keep $seg\_head = true$ and the remaining vertices set $seg\_head = false$. Else there are 2 SSLS whose vertices keep $seg\_head = true$ while others set $seg\_head = false$. Call particles occupying vertices with $seg\_head = true$, $P_1$ and $P_2$, and denote the direction from the successor of $P_1$ to $P_1$ as $D_1$ (similarly denote $D_2$). Now, $P_1$ and $P_2$ initiate $\mathtt{Mid-Line}$ resulting in two new particles $Q_1$ and $Q_2$ setting $seg\_head = true$ and $P_1$ and $P_2$ setting $seg\_head = false$ (two examples of this process are shown in Figure~\ref{fig:cat2-hexagon-2lsls-4ssls} and Figure~\ref{fig:cat2-hexagon-2lsls-2ssls}). 

The resulting segments of $Q_1$ and $Q_2$ form snakes $p_1$ and $p_2$ with lengths $w_1$ and $w_2$ respectively that run $\mathtt{Snake-Movement(D_1,w_1)}$ and $\mathtt{Snake-Movement(D_2,w_2)}$ in directions $D_1$ and $D_2$, respectively. In addition to the usual termination conditions when running  $\mathtt{Snake-Movement(D_1,w_1)}$ and $\mathtt{Snake-Movement(D_2,w_2)}$, the subroutines also terminate when the head of $p_1$ is adjacent to that of $p_2$. Then, the two heads run $\mathtt{MIS-Selection}$ and the particle that joins the MIS becomes the leader.

\item \textit{Category 3 hexagon:} Let $P_1, P_2,$ and $P_3$ occupy vertices $b_1, b_2,$ and $b_3$ such that $P_1, P_2,$ and $P_3$'s sides are the 3 largest sides. The remaining vertices set $seg\_head = false$ (example seen in Figure~\ref{fig:cat3-hexagon}). Let $D_1, D_2,$ and $D_3$ be the directions along the angle bisectors of $b_1, b_2,$ and $b_3$ respectively toward the center of the hexagon. The two phase procedure followed by $P_1$'s segment is now described, with $P_2$'s and $P_3$'s segments following similar procedures.  

In phase one, $P_1$ coordinates the simulation of a Turing machine on its segment. Notice that $P_1$'s segment encompasses 1 SSLS and 1 LSLS with lengths $x$ and $y$ respectively. In the simulation, the values of $f = \lfloor (y - x)/3 \rfloor$,  $g = x + f$, and $q = (y-x) \mod 3$ are computed and stored in $P_1$'s segment. If $q = 2$, it increments $g$ by 1. Now $P_1$ sends a message to the particle $Q_1$ located $f$ nodes from the head of the segment, telling $Q_1$ to set $seg\_head = true$ and store $D_1$, $f$, $g$, and $q$. $P_1$ subsequently sets $seg\_head = false$. $Q_1$ is now the head of a segment. $P_2$ and $P_3$'s segments perform similar procedures resulting in particles $Q_2$ and $Q_3$ respectively becoming heads of segments. Now $Q_1$ sends a message along the outer boundary to $Q_2$ and $Q_3$ indicating that the first phase is over. Once $Q_1$ receives a similar message from both of them, the second phase begins.

In phase two, $Q_1$'s segment acts as a snake and runs $\mathtt{Snake-Movement(D_1,g)}$. If $q = 0$, all three snakes move towards the same final node. Let the particle that occupies this node first be $R$. $R$ waits until the remaining two snakes reach it and then becomes the leader. If $q \neq 0$, the final nodes occupied by the three snakes form a triangle. Let $R$ be a particle that occupies one of the triangle's nodes. $R$ waits until the other two triangle's nodes are occupied and then runs $\mathtt{MIS-Selection}$. The particle chosen to be in the MIS becomes the leader.

\item \textit{Category 4 hexagon:} All vertices have $seg\_head = true$ (e.g., Figure~\ref{fig:cat4-hexagon}). Let $P_1$ to $P_6$ be the particles occupying the vertices $b_1$ to $b_6$ of the hexagon, with sides of same length $x$. Let $D_1$ to $D_6$ be the directions along the angle bisectors of $b_1$ to $b_6$ respectively toward the center of the hexagon. For each $i \in [1,6]$, $P_i$'s segment acts as a snake and runs $\mathtt{Snake-Movement(D_i,x)}$. All snakes move toward the same node. Let the particle that occupies this node first be $R$. $R$ waits until the 5 other snakes reach it, then becomes the leader.
\end{enumerate}

\begin{theorem}\label{the:convex-poly-le}
Procedure $\mathtt{Convex-Polygon-Leader-Election}$ run by contracted particles of a convex polygon without sharp edges results in exactly one leader being elected deterministically in $O(L^2)$ rounds, where $L$ is the number of particles on the outer boundary.
\end{theorem}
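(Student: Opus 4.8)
The plan is to separate correctness from running time, and within each, to go case by case through the four hexagon categories. First I would record the structural fact that, since the polygon has no sharp vertices, every outer-boundary vertex has boundary count $1$, so the outer boundary is a (possibly non-regular) hexagon with exactly six vertices and positive-integer side lengths; this is precisely what makes the four-category classification exhaustive. I would then argue that after the six vertex particles each run $\mathtt{Compare-Length(x)}$ for the constantly many values $1 \le x \le 6$ and exchange the outcomes, every vertex particle learns the full cyclic sequence of side lengths, and hence correctly determines the category and the role of its own side; this uses only Lemma~\ref{lem:running-time-compare-length} together with the constant-memory bound, and already costs $O(L^2)$ rounds.

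For correctness I would verify the geometry of each case. \emph{Category 1:} by definition there is a unique distinguished side (the unique LSLS or the unique SSLS), so exactly one particle keeps $seg\_head = true$ and becomes leader. \emph{Category 2:} I would check that the two retained vertices are \emph{opposite} vertices of the hexagon (the two LSLS, or else the two SSLS, always lie on a pair of opposite sides — a short closure argument for convex hexagons in the triangular grid), so Observation~\ref{obs:running-time-mid-line} applies and places the seg-heads $Q_1,Q_2$ on (the closer) mid-line between the two sides; their segments then form two disjoint snakes that by Lemma~\ref{lem:running-time-snake-movement} travel toward each other along $D_1,D_2$ without breaking connectivity until their heads are adjacent, whereupon $\mathtt{MIS-Selection}$ (Observation~\ref{obs:mis-selection}) elects exactly one of them. \emph{Category 3:} the three LSLS and three SSLS alternate, and I would check that the arithmetic $f=\lfloor(y-x)/3\rfloor$, $g=x+f$ (with the $+1$ correction when $q=(y-x)\bmod 3=2$) makes the three length-$g$ snakes, launched along the angle bisectors $D_1,D_2,D_3$ toward the center, terminate either all at one common node (when $q=0$) or at the three nodes of a unit triangle (when $q\neq 0$), so a single waiting particle, respectively $\mathtt{MIS-Selection}$ on the triangle, yields a unique leader. \emph{Category 4:} the hexagon is regular, its center is a single node at distance $x$ from each vertex along the angle bisector, so the six equal-length snakes all reach it and the first occupant becomes leader. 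In every case I must also check the side condition $x\le w$ demanded by $\mathtt{Snake-Movement}$ (each snake is at least as long as the distance it must travel) and that no particle is shared between two snakes, so the snake movements never issue contradictory instructions.

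For the running time I would simply observe that in every case the procedure is a constant-length sequence of calls to $\mathtt{Compare-Length}$, $\mathtt{Mid-Line}$, $\mathtt{Snake-Movement}$ and $\mathtt{MIS-Selection}$, plus a constant number of messages travelling $O(L)$ nodes along the boundary; by Lemma~\ref{lem:running-time-compare-length}, Observation~\ref{obs:running-time-mid-line}, Lemma~\ref{lem:running-time-snake-movement} (with $x\le L$) and Observation~\ref{obs:mis-selection}, each such call costs $O(L^2)$ rounds, and the waiting phases add no asymptotic overhead, giving $O(L^2)$ rounds overall.

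I expect the main obstacle to be the Category~3 and Category~4 geometry: pinning down exactly which node(s) the inward-moving snakes converge to as a function of the side lengths and the residue $q$, verifying that the offsets $f$ and $g$ (including the $q=2$ correction) are chosen correctly, and confirming that the snakes stay connected and collision-free throughout — in particular that the snake-head hand-off described in $\mathtt{Snake-Movement}$ behaves as claimed when two snakes would contend for the same node. Category~2 is comparatively easy since it reduces cleanly to $\mathtt{Mid-Line}$ followed by two opposing snakes, and Category~1 is immediate.
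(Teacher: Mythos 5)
Your proposal follows essentially the same route as the paper's proof: the same four-category case analysis, the same use of $\mathtt{Compare-Length}$, $\mathtt{Mid-Line}$, $\mathtt{Snake-Movement}$ and $\mathtt{MIS-Selection}$, and the same complexity accounting via Lemma~\ref{lem:running-time-compare-length}, Observation~\ref{obs:running-time-mid-line}, Lemma~\ref{lem:running-time-snake-movement} and Observation~\ref{obs:mis-selection}. The one step you defer --- pinning down where the Category~3 snakes converge --- is settled in the paper by inscribing the hexagon in an equilateral triangle of side $2x+y$ and noting that the snakes aim at its centroid, meeting at a single node when $(y-x) \bmod 3 = 0$ and otherwise at a unit triangle around the centroid (handled by $\mathtt{MIS-Selection}$), exactly as you anticipated.
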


\begin{proof}
Regarding correctness, we show that for each category of hexagon, a unique leader is elected. The reader can convince themselves that all types of hexagons have been accounted for in the four hexagon categories. 

It is clear that there exists a unique shortest or longest side in a category 1 hexagon and so a unique leader is chosen. 

In a category 2 hexagon, Observation~\ref{obs:running-time-mid-line} guarantees that particles are chosen such that they lie on the same mid-line or adjacent mid-lines. The distance needed to be traveled by each segment until both heads are adjacent is $\leq L/2$. Since the segments equally divide the nodes of the outer boundary, each snake has enough contracted particles such that it is possible to traverse this distance by expanding every particle in the snake. $\mathtt{MIS-Selection}$ is guaranteed to choose exactly one leader due to Observation~\ref{obs:mis-selection}.

For a category 3 hexagon, let us inscribe the hexagon in an equilateral triangle with vertices $A$, $B$, and $D$, centroid $C$, and $F$ trisecting $\overline{AB}$, as seen in Figure~\ref{fig:cat3-hexagon-relation-triangle}. Observe that each side of the triangle is of length $2x+y$ and $|\overline{AF}| = |\overline{FC}|$. When $(y-x) \mod 3 = 0$, $\overline{FC}$ coincides with a grid line and all snakes move toward $C$ using $\texttt{Snake-Movement()}$. However, if $(y-x) \mod 3 \neq 0$, the snakes move to nodes that form a triangle around the centroid, in which case $\mathtt{MIS-Selection}$ is run and Observation~\ref{obs:mis-selection} guarantees a leader is selected.

\begin{figure}
\begin{subfigure}{.5\textwidth}
  \centering
  \includegraphics[page=27,width=.8\linewidth]{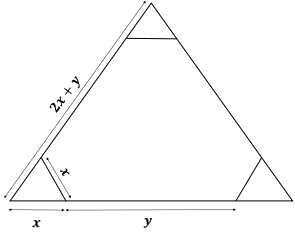}
  \caption{Hexagon inscribed in an equilateral triangle.}
  \label{fig:cat3-hexagon-inscribe-eq-triangle}
\end{subfigure}%
\begin{subfigure}{.5\textwidth}
  \centering
  \includegraphics[page=28,width=.8\linewidth]{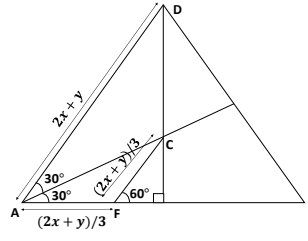}
  \caption{$C$ is the centroid of the triangle. $F$ trisects $\overline{AB}$. $\angle CFB = 60^\circ$.}
  \label{fig:cat3-hexagon-centroid}
\end{subfigure}
\caption{Category 3 hexagon with 3 SSLS of length $x$ and 3 LSLS of length $y$ inscribed in an equilateral triangle. $F$ is $(2x+y)/3$ distance from $A$ and $C$ is $(2x+y)/3$ distance from $F$.}
\label{fig:cat3-hexagon-relation-triangle}
\end{figure}

For a category 4 hexagon with side of length $x$, it is known that the centroid lies at a distance $x$ along the angle bisector of each vertex.

Thus for all four types of hexagons, a leader is chosen.

Regarding the running time, first  each vertex runs 6 instances of $\mathtt{Compare-Length(x)}$ in $O(L^2)$ rounds by Lemma~\ref{lem:running-time-compare-length} to decide which category the hexagon belongs to. For a category 1 hexagon, only an additional round is required until a unique leader is chosen. For a category 2 hexagon, each segment with $x$ nodes runs $\mathtt{Mid-Line}$, which takes $O(L^2)$ rounds by Observation~\ref{obs:running-time-mid-line}. Then each resulting snake runs $\mathtt{Snake-Movement(D,x)}$ in $O(L^2)$ rounds by Lemma~\ref{lem:running-time-snake-movement}. Then an additional round may be required for $\mathtt{MIS-Selection}$. For a category 3 hexagon, each segment performs the required calculations in $O(L^2)$ rounds and then the snake runs $\mathtt{Snake-Movement(D,g)}$ in $O(L^2)$ rounds by Lemma~\ref{lem:running-time-snake-movement} and possibly $\mathtt{MIS-Selection}$ for an additional round. For a category 4 hexagon, each snake with $x$ nodes runs $\mathtt{Snake-Movement(D,g)}$ in $O(L^2$) rounds. Thus, the total procedure takes $O(L^2)$ rounds for any particle.
\end{proof}


\subsection{Leader Election on a Spanning Tree}
\label{subsec:LE-spanning-tree}

Procedure $\mathtt{Spanning-Tree-Leader-Election}$ is presented to deterministically elect a unique leader when participating particles form a a spanning tree and have common chirality. Initially all participating particles have status \textbf{C}.

\alglanguage{pseudocode}
\begin{algorithm}
\caption{Spanning-Tree-Leader-Election, run by each particle $P$}
\label{prot:Spanning-Tree-Leader-Election}
\begin{algorithmic}[1]
	\State Let degree of $P$ be $\delta$
	\If {$P$ in state \textbf{C}}
		\If{$P$ received a \textit{leadership} message from $\delta$ adjacent particles}
			\State Change $P$'s status to \textbf{L}
		\ElsIf{$P$ received a \textit{leadership} message from $\delta-1$ adjacent particles}
			\State Change $P$'s status to \textbf{U}
			\State Send a \textit{leadership} message to $P$'s remaining adjacent particle.
		\EndIf
	\EndIf
\Statex
\end{algorithmic}
\end{algorithm}

The following theorem is given without proof as the correctness and running time follow trivially from the procedure description.

\begin{theorem}\label{the:sp-tree-le}
Procedure $\mathtt{Spanning-Tree-Leader-Election}$ run by particles forming a spanning tree of diameter $x$ results in exactly one leader being elected deterministically in $O(x)$ rounds.
\end{theorem}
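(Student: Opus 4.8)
The plan is to recognize $\mathtt{Spanning-Tree-Leader-Election}$ as the classical leaf-elimination (convergecast) leader election for trees, and to check its three requirements against the amoebot activation model: determinism, uniqueness of the leader, and the $O(x)$ round bound. Determinism is immediate, since a particle's action is a fixed function of its degree $\delta$ (in the spanning tree) and the number of \emph{leadership} messages already written into its ports; the only non-determinism is the adversarial schedule, and the arguments below hold for every legal (atomic, sequential) schedule. So the content is (i) termination within $O(x)$ asynchronous rounds and (ii) that exactly one particle ends with status \textbf{L}.

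For the round bound I would use a peeling argument. Define the \emph{elimination level} of a tree node $v$: $\ell(v)=0$ if $v$ is a leaf, and $\ell(v)=t$ if $v$ becomes a leaf exactly once all nodes of level $<t$ are deleted. I would prove by induction on $t$ that after $t+1$ asynchronous rounds every node of level $\le t$ has left status \textbf{C}. The base case holds because a leaf has (vacuously) received \emph{leadership} messages from $\delta-1=0$ neighbors, so it decides on its first activation. For the step, a level-$t$ node $v$ has all but one neighbor at level $<t$; by induction those neighbors decided by round $t$, and at the activation where such a neighbor decided, $v$ was still its unique ``not-yet-heard-from'' neighbor (since $v$, being of higher level, had sent nothing), so that neighbor became \textbf{U} and wrote a \emph{leadership} message into $v$'s memory by round $t$. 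Hence on $v$'s activation in round $t+1$ it has heard from $\ge \delta-1$ neighbors and leaves state \textbf{C}. Since the maximum elimination level of a tree is at most its radius, hence at most the diameter $x$, all particles decide within $x+1$ asynchronous rounds, giving both termination and the $O(x)$ bound.

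For uniqueness I would argue both directions. \emph{At least one leader:} if no particle becomes \textbf{L}, then by the round bound every particle becomes \textbf{U}, hence sends exactly one \emph{leadership} message (the status-\textbf{U} branch sends one message, to the unique neighbor it did not hear from), yielding $n$ messages carried on only $n-1$ tree edges; by pigeonhole some edge $(u,v)$ carries messages in both directions. But when $u$ was activated and sent to $v$, it had not yet heard from $v$, so $v$'s message to $u$ was written strictly after $u$'s deciding activation; symmetrically $u$'s message to $v$ was written strictly after $v$'s deciding activation; yet $u$ wrote its message to $v$ \emph{during} its deciding activation, forcing a cyclic ordering of the two deciding activations — a contradiction. \emph{At most one leader:} if $u$ and $w$ both become \textbf{L}, take the tree path $u=v_0,v_1,\dots,v_k=w$. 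Since $u$ became \textbf{L}, it heard from every neighbor, in particular from $v_1$, so $v_1$ became \textbf{U} and its unique unheard-from neighbor (the one it messaged) was $v_0$; hence $v_1$ heard from $v_2$, and iterating, $v_i$ messaged $v_{i-1}$ for all $i$, so $v_k=w$ sent a \emph{leadership} message — impossible, since a particle that becomes \textbf{L} never executes the sending branch and never changes status afterwards. Together these give exactly one \textbf{L}.

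I expect the main obstacle to be the interaction with the activation model rather than the tree combinatorics: one must invoke the atomicity of activations and the fact that an activated particle reads all messages already sitting in its ports, precisely to rule out the ``two waves collide on a single edge and both endpoints collapse to \textbf{U}'' scenario — this is exactly where the contradiction in the ``at least one leader'' step bites, and it is also the reason a truly simultaneous scheduler would break the procedure, as the paper already observes.
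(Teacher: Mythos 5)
Your proof is correct, and it is substantially more than what the paper provides: the paper states Theorem~\ref{the:sp-tree-le} explicitly \emph{without} proof, asserting that correctness and the running time ``follow trivially from the procedure description.'' What you supply is the rigorous version of that claim, and your three ingredients are exactly the ones the paper's terseness hides: (i) the peeling-level induction that converts the leaf-elimination intuition into an $O(x)$ bound on asynchronous rounds; (ii) the pigeonhole-plus-atomicity argument showing the two convergecast waves cannot annihilate each other on a single edge, which is precisely the point where the sequential, atomic scheduler is indispensable (and which, as you note, is why a simultaneous scheduler would break the procedure -- an issue the paper raises only in its related-work discussion); and (iii) the path argument that two \textbf{L}-particles would force the sending branch to fire at one of them. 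So the comparison is simply: the paper buys brevity by appeal to folklore, while your write-up buys a checkable guarantee that the folklore argument really survives the amoebot activation model.

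One small touch-up is needed in the induction step. You assert that when a level-$<t$ neighbor $u$ of a level-$t$ node $v$ decides, ``$v$ was still its unique not-yet-heard-from neighbor (since $v$, being of higher level, had sent nothing).'' That parenthetical is not true for every schedule: on a path $a$--$b$--$c$--$d$ with the wave racing from $a$, the level-$1$ node $c$ can decide and send to the leaf $d$ before $d$ is ever activated, and $d$ then becomes \textbf{L} and writes nothing to $c$. The induction is unharmed, because in that scenario $v$ has already left status \textbf{C}, which is all the inductive claim requires; but the step should be phrased as a case split (either $v$ has already decided, or $v$ has sent nothing, in which case each deciding lower-level neighbor necessarily became \textbf{U} and its unique unheard-from neighbor is $v$). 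With that sentence adjusted, the argument is complete.
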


\section{Leader Election}
\label{sec:leader-election-by-moving}

In this section, the main leader election algorithm is presented in detail. Initially, it is assumed that all particles have the same chirality, and it is shown in the next section how to remove this assumption.

First, an overview of the algorithm is given with the main theorem and supporting lemmas. Then each phase is explained in detail and relevant lemma(s) are proven in subsequent subsections.

\subsection{Algorithm Overview}
$\mathtt{Leader-Election-By-Moving}$ is a six phase deterministic algorithm run by a connected shape of particles with common chirality in order to elect a unique leader. Each phase of the algorithm is described in brief below with an extended description in the relevant subsection. As mentioned earlier, for the sake of convenience, the terms particle and node are used interchangeably when context makes the meaning clear. It is always the particle occupying a node that performs any action and when a particle is contracted, it only occupies one node.

The initial contracted configuration of $n$ particles forms a connected shape $G(0)$ at the beginning of round $0$ with all particles having status \textbf{C}. $H(0), K(0), F_1(0),$ and $F_2(0)$ are graphs at the beginning of round $0$ that are locally maintained by each particle and are initially empty.\footnote{For each graph, each particle maintains locally its own edges in the graph and whether it is in the graph or not. Each particle allots a constant amount of memory for each of the graphs $G, H, K, F_1, F_2$ and updates them as necessary when activated.} 
 Note that the round number is subsequently dropped, as it is apparent from context.  The algorithm has six phases. Graph $H$ is used throughout the algorithm for various purposes depending on the phase of the algorithm. Graph $K$ is a subgraph of $G$ that holds a spanning tree of all particles and is important for phase  6 of the algorithm. Graph $F_1$ is a forest of trees of all particles used throughout the algorithm. Graph $F_2$ is a forest of trees of a subset of the particles used only in phase 5 of the algorithm.

 Each particle $P$ maintains a phase counter, taking a constant number of bits, that it appends it to each message sent. If $P$ receives a message from another particle $Q$ in a different phase, $P$ does not process $Q$'s message until $P$ is in the same phase as $Q$.\footnote{It is trivial to return to a contracted configuration from the configuration the algorithm terminates in, so this ``7th phase'' is not described. Informally, it consists of particles that performed $\mathtt{Snake-Movement()}$ as part of $\mathtt{Convex-Polygon-Leader-Election}$ during phase 5 reversing their movements.}

\begin{enumerate}
\item \textit{Initialization:} Initially, all particles are contracted, have common chirality, and form a connected shape. At the end of this phase, every particle is contracted and each boundary particle has identified the type (inner/outer) of each of its boundaries. Furthermore, graph $H$ consists of a set of simple convex polygons, where two simple polygons may share the same semi-bridge particle.

Non-boundary particles change their phase to 2. Each boundary particle runs $\mathtt{Boundary-Detection}$ for each boundary $B$ it lies on to determine whether $B$ is an inner or outer boundary. Once $\mathtt{Boundary-Detection}$ terminates, all particles not on the outer boundary set $seg\_head = false$. Thus, there are $k$, $k \in \{1,2,3,6\}$, particles with $seg\_head = true$ located on the outer boundary. Call these seg-heads $P_1, P_2, \ldots, P_k$. If $k=1$, change $P_1$'s status to \textbf{L} and  broadcast (by simple flooding \cite{segall1983distributed}) a $final\_terminate$ message to other particles to terminate the algorithm and change their statuses to \textbf{U}.

Each particle that is not a bridge or semi-bridge particle adds itself and its edges to adjacent nodes to $H$. 
Semi-bridge particles add themselves and their non-bridge edges to $H$. 
Note that all particles are contracted at the end of this phase.

\item \textit{Spanning forest formation:} At the end of this phase, every particle has joined a tree in the spanning forest $F_1$, where every tree is rooted at one of the outer boundary nodes. Furthermore, each node knows its parent and children in the tree it joins.

Each outer boundary node $a$ becomes the root of a tree $T$ and sends messages to $a$'s neighbors to join $T$. Each node $b$ joins exactly one tree $T$, rejects requests from the other neighbors, and subsequently sends messages to $b$'s remaining neighbors to join $T$. This is done recursively until all nodes join some tree and then the phase terminates. Termination detection of a phase is coordinated by seg-heads $P_1$ to $P_k$. Thus a spanning forest of trees $F_1$ is formed with outer boundary particles as roots of the trees. Note that all particles are contracted at the end of this phase.

\item \textit{Convexification:} The subgraph $H$, induced by removing bridge particles from the shape, is a collection of polygons.  Each outer boundary particle $P$ w.r.t.\ $H$ that is a concave vertex and not a semi-bridge particle expands towards the outer boundary along $P$'s angle bisector while coordinating the pulling of $P$'s tree with it. $P$ occupying node $b$ and moving to node $c$ completes one step of convexification when it has moved to node $c$ and all particles in the tree rooted at $P$ in $F_1$ are in a contracted state. Convexification is performed repeatedly by particles until no more steps of convexification are possible.

At the same time, each seg-head $P_i$ ($1 \leq i \leq k$) continuously checks its segment for any concave vertices in $H$. If none are found, the $k$ seg-heads coordinate to terminate this phase. All particles previously in $H$ update their edges in $H$ to reflect current connections to other particles. Furthermore, bridge and semi-bridge particles add themselves and their bridge edges to graph $K$. Note that all particles are contracted at the end of this phase.

In this phase, there is a possibility for a reset to occur causing all particles to reset their states and start the algorithm afresh from phase one. There are two types of resets, \emph{type 1} and \emph{type 2}. A type 1 reset is triggered by a boundary particle that moved in some direction $D$ to node $b$ in one step of convexification, finding that the node adjacent to $b$ in direction $D$ is occupied. This reset occurs because the length of the boundary might have decreased. The trigger condition for a type 2 reset occurs when a particle $P$ which is initially a semi-bridge particle, a bridge particle, or an outer boundary particle stops being one. Both resets reflect a change in the particles occupying the outer boundary, possibly resulting in particles previously with $seg\_head = true$ no longer lying on the outer boundary. The exact procedure for each type of reset is described in detail in the relevant subsection. Note that the reset procedure ensures that all particles are contracted prior to returning to phase one.

\item \textit{De-sharpification:} In this phase, certain particles in $H$ remove themselves recursively until only convex polygons and two-node lines remain in $H$. Consider a particle $P$ in $H$. If $P$ is not a semi-bridge particle and is a sharp vertex w.r.t.\ the outer boundary in $H$, then $P$ removes itself from $H$. If $P$ is a semi-bridge particle and its occupied adjacent nodes are located at ports $x, x+1, x+3,$ and $x+4$ ($\mod 6$) for some positive integer value of $x$, then $P$ removes itself from $H$.

At the same time, each seg-head $P_i$ ($1 \leq i \leq k$) continuously checks its segment for any sharp vertices in $H$. If none are found, $P_i$ coordinates with the other $k-1$ seg-heads to terminate the phase.  The induced subgraph $H$ at the end of the phase is a set of convex polygons without sharp vertices and lines consisting of 2 nodes. Note that all particles are contracted at the end of this phase.

\item \textit{Leader election on individual polygons and spanning tree formation:} This phase consists of two stages. In stage one, each convex polygon and each line in $H$ elects a unique polygon leader using $\mathtt{Convex-Polygon-Leader-Election}$ and $\mathtt{MIS-Selection}$, respectively. In stage two, each particle $P$ chosen as a polygon leader in stage one, acts as a root and forms a tree that spans its connected component of $G\setminus K$. The nodes in $K$ that are reachable from $P$ over $G\setminus K$ are leaves of $P$'s tree.
  Call this forest of polygon leaders rooted trees $F_2$.

At the same time, each seg-head $P_i$ ($1 \leq i \leq k$) continuously checks if its segment satisfies the termination condition, described in detail in the relevant subsection. Once each seg-head $P_i$'s segment satisfies the termination condition, the $k$ seg-heads coordinate with each other and terminate this phase.

At the end of this phase, $K$ is updated to contain all particles in graph $G$ with edges restricted to bridge edges and edges from $F_2$.

\item \textit{Leader election on a spanning tree:} Each particle participates in $\mathtt{Spanning-Tree-Leader-Election}$ on the graph $K$.\footnote{Specifically, the degree of each particle $P$ is calculated based on which particles are adjacent to $P$ w.r.t. $K$.} Once a particle $P$ changes its status to \textbf{L}, $P$ broadcasts a $final\_terminate$ message by flooding along $K$. This results in one particle, the leader, having status \textbf{L} and the remaining particles having status \textbf{U} when the algorithm terminates.
\end{enumerate}

The following lemmas apply to the algorithm and are proven in the respective subsections.

\begin{lemma}\label{lem:main-alg-phase-one}
	Phase 1 terminates in $O(L_m^2)$ rounds, where $L_m$ is the length of the largest boundary of the shape, resulting in each boundary particle knowing what type each of its boundaries is and $k$, $k \in \{1,2,3,6\}$, particles, $P_1, P_2, \ldots, P_k$, lying on the outer boundary with $seg\_head = true$. Furthermore, at the end of the phase, $H$ consists of a set of simple convex polygons, where two simple polygons may share the same semi-bridge particle. If $k=1$, the algorithm terminates with one particle as leader in an additional $O(n)$ rounds.
\end{lemma}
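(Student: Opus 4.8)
The plan is to verify each of the four claims of the lemma against the description of Phase 1 (Initialization) given in the algorithm overview, leaning on Theorem~\ref{the:boundary-detection} for the heavy lifting. First I would establish the structural claims. Phase 1 simply has each boundary particle run $\mathtt{Boundary-Detection}$ on each of its (at most three) boundaries, while non-boundary particles immediately advance to phase~2 and wait. By Theorem~\ref{the:boundary-detection}, when $\mathtt{Boundary-Detection}$ terminates, every boundary node knows for each of its boundaries whether that boundary is inner or outer, and every $seg\_head = true$ node on a boundary knows the number $k \in \{1,2,3,6\}$ of co-segment-heads on that boundary. After termination, all particles not on the outer boundary set $seg\_head = false$; since the outer boundary is a single boundary (there is exactly one outer face), this leaves exactly $k$ seg-heads $P_1,\dots,P_k$ with $k \in \{1,2,3,6\}$, all on the outer boundary, which is the second claim.

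Next I would argue the claim about $H$. By construction in Phase~1, every particle that is neither a bridge nor a semi-bridge particle adds itself and all its edges to $H$, and each semi-bridge particle adds itself together with its \emph{non-bridge} edges. Thus $H$ is exactly the shape with bridge particles deleted and bridge edges of semi-bridge particles removed. I would then invoke the definitions from Section~2: removing bridge particles and bridge edges disconnects the shape precisely at the ``pinch points,'' and each resulting component, bounded by a cycle all of whose vertices are convex (bridge/semi-bridge nodes being the only possible non-convex outer-boundary vertices in the relevant sense), is a simple convex polygon; two such polygons can meet only at a shared semi-bridge particle whose bridge edges were deleted. Hence $H$ is a set of simple convex polygons, possibly sharing semi-bridge particles — matching the definition of ``two (or more) simple convex polygons sharing the same semi-bridge particle(s)'' given earlier. (Strictly, this structural claim is the subtlest point and I would phrase it as: the only outer-boundary nodes of the original shape that are not convex vertices are exactly bridge and semi-bridge particles, which follows from the boundary-count characterization of convex/concave vertices together with the definitions of bridge and semi-bridge particles; removing them and their bridge edges therefore yields only convex polygons.)

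For the running time, $\mathtt{Boundary-Detection}$ terminates in $O(L_{\max}^2) = O(L_m^2)$ rounds by Theorem~\ref{the:boundary-detection}, and the post-processing ($seg\_head$ resets, adding nodes/edges to $H$, the various local updates) takes $O(1)$ additional rounds per particle and at most $O(L_m)$ rounds to propagate along a boundary; this is dominated by $O(L_m^2)$. Finally, for the $k=1$ case: the single seg-head $P_1$ changes its status to \textbf{L} and floods a $final\_terminate$ message; by standard analysis of flooding on a connected $n$-node graph (cf.~\cite{segall1983distributed}), every particle receives the message and sets its status to \textbf{U} within $O(n)$ additional rounds, after which the algorithm terminates with $P_1$ the unique leader. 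I expect the main obstacle to be pinning down the structural claim about $H$ rigorously — specifically, verifying that removing bridge/semi-bridge particles and their bridge edges cannot leave a concave outer-boundary vertex behind and that the pieces that result are connected convex polygons — whereas the termination, the seg-head count, and the $k=1$ termination bound are essentially immediate given Theorem~\ref{the:boundary-detection} and standard flooding.
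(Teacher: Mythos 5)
Your handling of termination, the seg-head count, and the $k=1$ flooding matches the paper's proof, which likewise just invokes Theorem~\ref{the:boundary-detection} and standard flooding for the broadcast. The problem is the structural claim about $H$, and specifically the step you yourself flag as the subtlest point: that ``the only outer-boundary nodes of the original shape that are not convex vertices are exactly bridge and semi-bridge particles.'' That claim is false. A concave vertex (boundary count $-1$) can occur at an ordinary boundary particle with four or five occupied neighbours --- think of the inner corner of an L-shaped region --- and such a particle is neither a bridge nor a semi-bridge particle. Indeed, the whole purpose of Phase 3 (convexification) and Phase 4 (de-sharpification) is to eliminate precisely these concave and sharp vertices, which are still present in $H$ at the end of Phase 1; compare Lemma~\ref{lem:main-alg-phase-three}, which claims convexity of the polygons in $H$ only \emph{after} convexification. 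So your derivation of the ``simple convex polygons'' part of the statement does not go through.

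For comparison, the paper's own proof of this part is a one-liner of a different flavour: it observes only that a node can have boundary count $3$ only if it is a bridge particle, so that deleting bridge particles (and excluding the bridge edges of semi-bridge particles) removes the points where the outer boundary pinches onto itself, leaving a collection of simple polygons that can meet only at shared semi-bridge particles. It makes no real attempt to establish convexity at this stage --- and convexity generally does not hold yet --- so the ``convex'' in the lemma must be read loosely, with genuine convexity produced only in Phase 3. To repair your write-up, drop the convexity argument and replace it with this weaker decomposition-into-simple-polygons observation, which is all that is actually available (and all that the subsequent phases rely on) at the end of Phase 1.
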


\begin{lemma}\label{lem:main-alg-phase-two}
	Phase 2 terminates in $O(n)$ rounds, resulting in a disjoint forest of trees $F_1$ covering every particle.
\end{lemma}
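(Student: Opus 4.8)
The plan is to recognize Phase 2 as a concurrent multi-source breadth-first exploration of the (static) communication graph and to argue correctness and round complexity directly from the activation model. Since the initial shape $G(0)$ is connected, all particles are contracted at the start of Phase 2, and no particle moves during this phase, $G(0)$ is a fixed connected graph on the $n$ particles throughout. Fix an arbitrary execution. Each outer boundary particle becomes a root, and I will say that a particle $b$ \emph{joins at time $t$} if round $t$ contains the first activation in which $b$ accepts a join request (roots join at time $0$). A straightforward induction on $t$ shows: (i) if $b$ joins at time $t>0$, then the particle $b$ records as its parent joined at some time $t'<t$; and (ii) every particle at hop-distance $d$ from the set of roots joins by the end of round $d$, because some neighbor at distance $d-1$ has already joined and, when activated, sent $b$ a join request, which $b$ then processes no later than its next activation.

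From (i), following parent pointers strictly decreases the join time, so the parent relation is acyclic; combined with the rule that each non-root particle has exactly one parent and that a particle acknowledges a particle as its child only for the one request it accepted, the edges recorded in $F_1$ form a forest in which every tree is rooted at an outer boundary particle and every node knows its parent and its children. The rule ``joins exactly one tree, rejects the others,'' together with the atomicity of each activation, guarantees that a particle never belongs to two trees, so the trees are vertex-disjoint; and (ii) guarantees they cover all $n$ particles. Hence $F_1$ is a disjoint spanning forest, as claimed. For the running time, (ii) gives that after at most $\mathrm{ecc}\le n$ rounds every particle has joined its tree and every outgoing join request has been answered. It then remains to detect this globally: I would use a convergecast inside each tree of $F_1$ (a particle declares itself \emph{finished} once it has joined a tree, has received an accept/reject from every neighbor it queried, and has received a finished-signal from every child; leaves finish first, and the signal reaches the root in $O(\text{tree depth})=O(n)$ rounds), followed by a coordination pass: the $k\le 6$ roots that are seg-heads pass a token along the outer boundary (length $L\le n$) to confirm all roots are finished, and a termination message is then flooded back along $F_1$ in $O(n)$ rounds. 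Summing these $O(n)$ stages yields the claimed $O(n)$ bound.

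The main obstacle I expect is the termination-detection bookkeeping: because activations are asynchronous and the trees of $F_1$ grow concurrently from different roots, one must ensure that a particle reports ``finished'' only after it genuinely holds its final parent and full child set, so that no late-arriving join request from a farther root can retroactively alter the forest, and that the seg-head coordination cannot deadlock while some root's subtree is still growing. This is resolved by the observations that a particle's parent is fixed at the instant it first accepts a request, that rejected requests are never reconsidered, and that the convergecast signal is emitted only after all outgoing requests have been answered --- so once a root has heard ``finished'' from all of its children, no subtree can still be growing. The remaining parts (the flooding, the convergecast, and the boundary token pass) are routine and contribute only $O(n)$ rounds each.
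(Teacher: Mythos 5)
Your proposal is correct and follows essentially the same route as the paper's proof: bound the growth of each tree by its $O(n)$ height/eccentricity, let acknowledgements converge back to the outer-boundary roots, have the $k$ seg-heads coordinate along the boundary in $O(L)$ rounds, and flood the phase-change message through $F_1$. The only difference is that you spell out the forest-correctness induction (acyclicity, disjointness, coverage) and the termination-detection bookkeeping explicitly, which the paper leaves implicit in its Phase 2 description.
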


\begin{lemma}\label{lem:main-alg-phase-three}
	Phase 3 terminates in $O(Ln)$ rounds, resulting in either a reset or a graph $H$ containing a set of simple convex polygons, where two simple polygons may share the same semi-bridge particle.
\end{lemma}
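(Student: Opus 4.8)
\textbf{Invariants.}
I would first record the invariants maintained by every convexification step for as long as no reset has fired. Every bridge and semi-bridge particle lies on the outer boundary of $G$, hence is the root of its own tree in $F_1$ and is therefore never pulled, so no such particle moves during Phase~3; consequently the partition of $H$ into polygons, and which particle is shared by two polygons, are preserved. Moreover every tree of $F_1$ lies inside a single polygon of $H$, since the Phase~2 flooding cannot cross a bridge (or semi-bridge) particle, that particle being a root that rejects all join requests. Therefore, when a concave vertex $P$ of a polygon $\pi$ expands one node along the bisector of its reflex angle and then re-contracts its $F_1$-subtree behind it (a cascade of hand-overs through at most $n$ particles), only particles of $\pi$ move, $\pi$ stays connected, and the outer boundary of $\pi$ has exactly one dent made one unit shallower --- unless the node beyond $P$'s target is occupied, which is precisely the type~1 reset trigger, or $P$ loses a structural role, which is the type~2 trigger. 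New holes may appear, but this is harmless, as the definition of a convex polygon constrains only the outer boundary.

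\textbf{Correctness.}
Given the invariants, the structural claim is short. Phase~3 ends normally only after the seg-heads $P_1,\dots,P_k$ jointly certify that no node on their segments is a concave vertex w.r.t.\ the outer boundary of its polygon. Since the segments partition the outer boundary of $G$, and removing bridge particles and bridge edges (each incident to the outer boundary of $G$) never exposes a previously interior node, every outer-boundary node of every polygon of $H$ lies on the outer boundary of $G$; hence the certificate rules out a concave outer vertex anywhere in $H$. Thus every polygon of $H$ is a simple convex polygon, and by the first invariant two of them may still share a semi-bridge particle --- exactly the claimed conclusion. If instead a type~1 or type~2 reset fires first, we are in the ``reset'' alternative.

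\textbf{Running time.}
A single convexification step drags an $F_1$-subtree of at most $n$ particles one unit, so by the cascade analysis in the proof of Lemma~\ref{lem:running-time-snake-movement} it finishes in $O(n)$ asynchronous rounds; moreover distinct concave vertices are distinct $F_1$-roots, so all currently active steps run on disjoint particle sets and advance in parallel. Grouping the execution into ``sweeps'' of $O(n)$ rounds, each sweep performs one step at every node that is currently a concave vertex. Writing $\ell_\pi$ for the length of the initial outer boundary of a polygon $\pi$, I would bound the number of sweeps by $O(L)$ via a potential $\Phi(\pi)$ measuring how far the current outer boundary of $\pi$ is from the convex hull of its initial outer boundary, accounting for both the \emph{depth} of the dents and the \emph{width} of flat dent bottoms: $\Phi(\pi)$ equals $O(\ell_\pi)=O(L)$ initially, is non-increasing (a filled dent never pushes $\pi$ past its initial hull), and each sweep launches into every dent a ``filling wave'' that advances one unit, so $\Phi(\pi)$ strictly decreases every sweep until $\pi$ is convex. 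Hence every polygon is convex after $O(L)$ sweeps, i.e.\ $O(Ln)$ rounds; adding the $O(n)$ rounds for a reset to propagate and re-contract the shape, and the $O(L^2)=O(Ln)$ rounds (using $L\le n$) spent by the seg-heads scanning and coordinating termination --- which run concurrently with the maneuvers --- still gives $O(Ln)$.

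\textbf{Main obstacle.}
The crux is the bound ``$O(L)$ sweeps, not $O(L^2)$''. A wide but shallow dent whose $\Theta(L)$-node flat bottom is all at maximal depth still has only two concave vertices, so ``maximal depth'' alone is unchanged for the first $\Theta(\text{width})$ sweeps; one must show that the filling proceeds as a wave sweeping each dent's boundary at unit speed, so that the work per polygon is governed by its perimeter rather than its area, and must encode this in a potential $\Phi$ that really drops every sweep. A secondary difficulty is making the first invariant fully rigorous: pinning down the hand-over cascade that drags $P$'s $F_1$-subtree one unit (termination, preserved connectivity, all particles contracted at the step's end) and verifying that every genuine collision of two boundary fronts, or loss of a structural role, does trigger a reset, so that the timing analysis may legitimately assume ``no reset'' and still bound the time until a reset by the same $O(Ln)$.
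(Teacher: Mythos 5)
Your invariants and the correctness half track the paper's reasoning closely, but the quantitative core of your time bound is missing, and you say so yourself: the claim that $O(L)$ ``sweeps'' suffice rests on a potential $\Phi(\pi)$ that ``really drops every sweep,'' which you never construct, and your own flat-bottomed-dent example shows why the obvious candidates (maximal dent depth, number of concave vertices) do not decrease per sweep. There is a second unproven step hiding in the same paragraph: the decomposition of an asynchronous execution, in which individual convexification steps take $\Theta(n)$ rounds of varying and interleaved durations, into synchronized sweeps in which ``every currently concave vertex performs one step'' is itself a scheduling claim that needs an argument (a vertex may become concave only mid-sweep, after a neighbouring root has finished its step). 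So as written, the $O(Ln)$ bound is asserted rather than proved.

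The paper avoids the sweep/potential route entirely. Its key observation is geometric: circumscribe each simple polygon $A$ of $H$, of outer-boundary length $x$, by a regular hexagon $B$ of side $y\le x$; the polygon obtained by convexifying $A$ fits inside (or equals) $B$, so no particle ever travels more than $6y=O(L)$ nodes during the whole phase. Each single step (expand the root, cascade pulls down one branch of its $F_1$-tree, acknowledge back) costs $O(n)$ rounds, exactly as in your cascade analysis, giving $O(L)\cdot O(n)=O(Ln)$ rounds per particle, plus $O(n)$ for reset propagation or seg-head termination. If you want to salvage your approach, the honest fix is to replace the per-sweep potential by this per-particle displacement bound (or to actually exhibit a perimeter-type potential and prove the unit-speed ``filling wave'' claim); without one of these, the lemma's $O(Ln)$ statement is not established by your proposal.
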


\begin{lemma}\label{lem:num-resets}
	There can be at most $L$ resets occurring in phase 3, where $L$ is the length of the outer boundary of the original shape.
\end{lemma}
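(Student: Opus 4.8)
The plan is to track a potential function over the lifetime of the algorithm and argue that each reset strictly decreases it by at least one, while it is bounded below. The natural choice is the length $L$ of the outer boundary of the current shape. First I would observe, from the description of phase 3, that both reset types are triggered precisely by a change in the set of particles occupying the outer boundary, and more specifically by a configuration change that can only \emph{shorten} the outer boundary. For a type 1 reset, a boundary particle moved in direction $D$ to a node $b$ and found the node adjacent to $b$ in direction $D$ occupied; this means the convexification move merged previously separated boundary segments, so the outer boundary length strictly decreased during that step. For a type 2 reset, a particle that was a semi-bridge, bridge, or outer boundary particle ceased to be one; I would argue that in every such case the outer boundary (re)computed on the new shape is strictly shorter than before, since the offending particle's disappearance from the boundary removes at least one node's worth of traversal from the outer face walk (and convexification never adds nodes to the outer boundary — particles only move outward along angle bisectors, which cannot lengthen the outer boundary, and then the boundary is re-examined).

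The key steps, in order, would be: (1) formalize the claim that the outer boundary length is non-increasing across every step of phase 3 that does not trigger a reset — this follows because convexification moves a concave vertex outward along its angle bisector, an operation that locally replaces a concave corner with a straighter boundary and cannot increase $L_B$ for the outer boundary $B$; (2) show that each reset event corresponds to a step in which the outer boundary length strictly decreased by at least one, by a case analysis on the two trigger conditions as sketched above; (3) conclude that, since each reset restarts the algorithm from phase 1 on a new shape with a strictly smaller outer boundary, and the outer boundary length is a positive integer bounded above by $L$, the total number of resets is at most $L$.

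The main obstacle I expect is step (2), specifically the type 2 reset analysis, since "a particle stops being a semi-bridge/bridge/outer boundary particle" is a fairly coarse trigger and one must verify carefully that \emph{every} way this can happen — e.g. a bridge particle gaining an extra occupied neighbor and becoming an interior particle, or an outer boundary particle being enclosed — genuinely reduces the outer boundary length rather than merely changing its structure. I would handle this by noting that the configuration changes in phase 3 come only from convexification expansions and the induced pulls of trees in $F_1$, so the relevant local pictures are limited; in each, the node that leaves the outer boundary was counted at least once in the outer face walk, and no new nodes are created on the outer boundary by an outward move, giving the needed strict decrease. A secondary subtlety is making sure that the outer boundary of the shape after a reset is well-defined and compared against the \emph{same} shape's outer boundary just before the reset (not against the original shape); the inductive bound of $L$ then follows from $L$ being the length of the outer boundary of the original shape and the quantity only ever decreasing across resets.
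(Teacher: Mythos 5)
Your plan hinges on two monotonicity claims about the length of the current outer boundary, and the one you yourself flag as the risky step --- that \emph{every} type~2 reset strictly shortens the outer boundary --- is the genuine gap; it is not what the paper proves, and it does not appear to be true. A type~2 reset fires as soon as a bridge, semi-bridge, or outer-boundary particle changes its local nature because an adjacent node became occupied; this is a purely local event and need not shorten the boundary. Concretely, let a convexifying polygon's head expand into a node $c$ adjacent to the side of a line of bridge particles in a way that does not enclose a hole (contact at the mouth of the gap): $c$ becomes a new outer-boundary node, every line and polygon node near the contact keeps some exposure to the outer face and so stays on the outer boundary, and the node vacated at the tail of the pulled tree in $F_1$ may have been interior --- so the outer boundary length does not drop (it can even grow by one), yet the adjacent bridge particles become semi-bridge particles and trigger a type~2 reset. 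The paper never claims a strict decrease for type~2 resets; it only argues non-increase, and it bounds their \emph{number} by a completely different, direct counting argument: there are at most $L$ bridge, semi-bridge, and contracted outer-boundary particles in the original shape, each surrounded by at most $6$ unoccupied nodes, and each type~2 trigger consumes one such node becoming occupied, so only $O(L)$ type~2 resets can ever occur. The strict-decrease argument is reserved for type~1 resets (where the old outer boundary is split and one piece becomes an inner boundary), which is essentially your argument for that case.

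Your step (1) is also shakier than you suggest: a single convexification step can \emph{increase} the outer-boundary node count, because when the leaf of the pulled tree vacates an outer-boundary node $e$, the node $e$ joins the outer face and can expose several previously interior particles, all of which then join the outer boundary; the $+1$ at the head is only compensated by the concave vertex leaving the boundary, so the net change can be positive. Without both monotonicity claims your potential function gives no bound at all on the number of type~2 resets, so the proof as proposed does not go through; to fix it you would either need a quantity that provably never increases during convexification and strictly drops at every reset (the outer-boundary length is not such a quantity), or you should switch, as the paper does, to charging each type~2 reset to one of the $O(L)$ (original boundary particle, adjacent unoccupied node) pairs and using the strict decrease only for type~1 resets.
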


\begin{lemma}\label{lem:main-alg-phase-four}
	Phase 4 takes $O(n)$ rounds to complete, resulting in $H$ containing only a set of lines consisting of 2 nodes and convex polygons without sharp vertices.
\end{lemma}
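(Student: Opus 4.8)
The plan is to establish three things: (i) termination, (ii) correctness of the output structure (only convex polygons without sharp vertices and 2-node lines remain in $H$), and (iii) the $O(n)$ round bound. The key observation driving all three is that Phase 4 only \emph{removes} particles from $H$ (no movement, no expansion), so the dynamics are monotone: $|H|$ strictly decreases each time a particle removes itself, hence at most $n$ removal events can occur. This immediately gives termination in a finite number of rounds, and with a bit more care the $O(n)$ bound.

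First I would argue correctness. A particle $P$ in $H$ removes itself under exactly two trigger conditions: $P$ is a non-semi-bridge sharp vertex ($w=2$) w.r.t.\ the outer boundary in $H$, or $P$ is a semi-bridge particle whose occupied neighbors in $H$ sit at ports $x, x{+}1, x{+}3, x{+}4 \pmod 6$. I would check that (a) whenever such a $P$ exists the phase does not terminate — this follows because each seg-head $P_i$ continuously scans its segment for sharp vertices in $H$ and the $k$ seg-heads only coordinate termination when none are found, so a persisting sharp vertex forces the scan to keep failing; and (b) once no such $P$ exists, every outer-boundary vertex of $H$ has boundary count in $\{-1,1\}$ or is a semi-bridge particle in a ``line-like'' port configuration, which by the definitions in Section~\ref{sec:building-blocks} means each connected component of $H$ is either a simple convex polygon without sharp vertices or a line of $2$ nodes. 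Here I would invoke Lemma~\ref{lem:main-alg-phase-three}: entering Phase 4, $H$ is already a set of simple convex polygons possibly sharing semi-bridge particles, so removals can only ``split'' a shared semi-bridge configuration into separate convex polygons or shrink a degenerate polygon down to a $2$-node line — I would verify no removal can ever create a \emph{new} concave vertex or disconnect the shape's overall connectivity (connectivity of $G$ is untouched since particles stay put; only membership in $H$ changes). The recursive nature — $P$'s removal may expose a neighbor as a new sharp vertex — terminates because $|H|$ only goes down.

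Next, the running time. Each removal is a purely local decision requiring $P$ to read its neighbors' $H$-membership and port positions, i.e.\ $O(1)$ rounds; after removal, its former neighbors learn of the change within $O(1)$ rounds. A chain of recursive removals can propagate at most along a path of length $O(n)$, and the total number of removals is at most $n$, but these can overlap in time, so the dominant cost is termination detection: each seg-head must repeatedly sweep its segment (length $O(L_{\max}) = O(n)$) checking for sharp vertices, and the final coordination among the $k = O(1)$ seg-heads to agree the phase is over takes $O(n)$ rounds (flooding/convergecast along the outer boundary). Since the ``no sharp vertex'' condition, once reached, is stable (no further removals, no movements), one such successful sweep suffices. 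Hence the phase completes in $O(n)$ rounds.

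The main obstacle I anticipate is the correctness argument (ii): carefully checking the case analysis on the semi-bridge port configurations to confirm that the only stable outputs are convex polygons without sharp vertices and $2$-node lines, and in particular that removing a semi-bridge particle cleanly separates two convex polygons rather than leaving some hybrid or concave structure, and that no removal ever re-introduces a concavity that was cleared in Phase 3. This requires a somewhat tedious but elementary enumeration of how boundary counts of a removed vertex's neighbors change; I would handle it by noting that removing a sharp ($w=2$) convex vertex from a convex chain can only turn a neighbor from convex into sharp (never into concave), and removing the specified semi-bridge configuration disconnects $H$ at exactly that node into two pieces each of which was already convex by the Phase~3 invariant. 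Everything else — termination and the round bound — follows routinely from the monotone (removal-only, movement-free) structure of the phase.
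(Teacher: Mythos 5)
Your proposal is correct and follows essentially the same route as the paper's proof: start from the Phase-3 invariant, remove the semi-bridge particles with the $x, x+1, x+3, x+4$ port pattern to leave convex polygons and 2-node lines, let sharp vertices cascade out in a removal-only process taking $O(n)$ rounds, and argue the seg-heads cannot declare termination while a sharp vertex persists. The one point the paper makes explicit that you leave implicit is that any \emph{newly created} sharp vertex still lies on one of the $k$ outer-boundary segments, which is exactly what guarantees the seg-heads' scans detect it and the phase does not end early.
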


\begin{lemma}\label{lem:main-alg-phase-five}
	Phase 5 terminates in $O(L^2 + n)$ rounds resulting in $K$ containing all particles and forming a spanning tree.
\end{lemma}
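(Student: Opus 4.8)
The plan is to treat the two stages of Phase~5 separately: read off the running time of Stage~1 from the building blocks of Section~\ref{sec:building-blocks}, bound Stage~2 by a simple flooding argument, and then argue combinatorially that the final rewrite of $K$ yields a spanning tree.

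For Stage~1, note that by Lemma~\ref{lem:main-alg-phase-four} every particle is contracted at the start of Phase~5 and $H$ is a disjoint union of two-node lines and convex polygons without sharp vertices. On a two-node line the two particles are adjacent, so $\mathtt{MIS-Selection}$ elects exactly one of them in one round (Observation~\ref{obs:mis-selection}); on a convex polygon $p$ without sharp vertices, $\mathtt{Convex-Polygon-Leader-Election}$ elects a unique polygon leader in $O(L_p^2)$ rounds, where $L_p$ is the number of particles on $p$'s hexagonal outer boundary (Theorem~\ref{the:convex-poly-le}). Two points then have to be verified. First, the per-component invocations run concurrently, so Stage~1 costs $\max_p O(L_p^2)+O(1)$ rather than the sum; the only particles two components of $H$ can share are semi-bridge particles without adjacent bridge edges, and after the de-sharpification of Phase~4 such a shared particle sits at a common apex of the two hexagons, so the two invocations never prescribe conflicting moves for it (only outer-boundary segments move during $\mathtt{Snake-Movement}$, and at a common apex these moves agree). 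Second, $L_p\le L$ for every $p$: by the invariants of Phases~3--4 and Lemma~\ref{lem:num-resets} the current outer boundary never becomes longer than its initial length $L$, and $p$'s hexagon consists of outer-boundary nodes (two polygons touch internally only at a single semi-bridge particle, which itself lies on the outer boundary). Hence Stage~1 terminates in $O(L^2)$ rounds, leaving every line and polygon with exactly one polygon leader and all particles contracted.

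For Stage~2, recall that by the end of Phase~3 the graph $K$ --- call this version $K_0$ --- already contains every bridge and semi-bridge particle together with all bridge edges. Each polygon leader $P$ now floods $G\setminus K_0$, building a BFS tree over the component of $G\setminus K_0$ containing $P$ and attaching as leaves the $K_0$-particles adjacent to that component; the union of these trees is $F_2$. As $G$ is connected with $n$ particles this takes $O(n)$ rounds, with termination detected by the seg-heads exactly as in earlier phases, after which $K$ is rewritten as the graph on all $n$ particles with edge set $E_{\mathrm{bridge}}\cup E(F_2)$. It remains to show this $K$ is a spanning tree. \emph{Coverage:} every particle is either a bridge/semi-bridge particle --- a vertex of $K_0$ and a leaf of some $F_2$ tree --- or a particle of $H$ lying in some component of $G\setminus K_0$, hence a vertex of the $F_2$ tree grown by that component's polygon leader, so $V(K)=V(G)$. \emph{Connectivity:} each $F_2$ tree is connected and includes every $K_0$-particle bordering its component, and since $G$ is connected, contracting each component of $G\setminus K_0$ to a point leaves the bridge edges spanning the contracted components and bridge particles; hence $E_{\mathrm{bridge}}\cup E(F_2)$ connects $V(G)$. \emph{Acyclicity:} a bridge (resp.\ semi-bridge) particle $b$ with $i$ bridge edges is a cut vertex whose removal splits the shape into exactly $i$ pieces, one along each bridge edge, so $b$'s bridge edges join $i+1$ distinct pieces in a star; aggregating over all such particles, $E_{\mathrm{bridge}}$ is a forest whose blocks are precisely the components of $G\setminus K_0$, and substituting for each block the acyclic $F_2$ tree spanning it keeps the whole graph acyclic. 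Coverage, connectivity and acyclicity give that $K$ is a spanning tree, and the running time is $O(L^2)+O(n)=O(L^2+n)$.

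I expect the acyclicity step to be the real obstacle: one must rule out, after Phases~3--4, a bridge particle two or three of whose bridge-edge endpoints lie in a common component of $G\setminus K_0$ (which would close a cycle), and more generally show that the incidence graph of pieces and bridge/semi-bridge particles is a tree --- this is exactly where the reduction of $H$ to convex polygons and two-node lines, and the resulting fact that each piece attaches to the rest of the shape only through genuine cut vertices, must be invoked. A lighter but still necessary point is the non-interference of the concurrent Stage~1 invocations through shared semi-bridge particles together with the bound $L_p\le L$, without which Stage~1 would only be bounded by $O(\sum_p L_p^2)$.
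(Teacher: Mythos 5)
Your write-up follows essentially the same route as the paper's proof: Stage~1 is charged to Observation~\ref{obs:mis-selection} and Theorem~\ref{the:convex-poly-le}, Stage~2 (tree formation plus seg-head coordination) to $O(n)$ rounds, and correctness is obtained by gluing the per-polygon trees along bridge edges into one spanning tree. You are in fact more explicit than the paper on several points it leaves implicit (concurrent, non-conflicting invocations at shared semi-bridge particles; the per-polygon hexagon boundary being $O(L)$; and the coverage/connectivity/acyclicity breakdown, which the paper compresses into the single assertion that ``the addition of bridge edges and bridge and semi-bridge particles to these trees results in a tree spanning the entire shape'').

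Two remarks. First, the obstacle you defer at the end (ruling out a bridge particle with two bridge-edge endpoints in one component, and showing the piece/bridge incidence structure is a tree) does not require the Phase~3--4 structure at all: by definition a bridge edge has the outer face on both of its sides, and in a connected plane graph an edge bordered by the same face on both sides is a cut edge; hence contracting the components and keeping the bridge edges can never close a cycle. So that part of your argument can be completed directly from the definitions. Second, the step you gloss over (``termination detected by the seg-heads exactly as in earlier phases'') is precisely the one the paper's correctness paragraph is devoted to: the lemma claims that \emph{at termination} $K$ already contains all particles and all $F_2$ edges, and the paper argues this via the frozen-semi-bridge condition --- semi-bridge particles always lie on the outer boundary, so the seg-heads cannot declare termination before every semi-bridge particle has received a $freeze$ message from each connected component of $\mathcal{S}''$, i.e., before every polygon's tree, including its attachment to shared semi-bridge particles on both sides, exists. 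Without some such argument, premature termination is not excluded and ``$K$ forming a spanning tree'' at the end of the phase is not established; relatedly, your connectivity step tacitly assumes a shared semi-bridge particle with no bridge edges becomes a leaf of the tree on \emph{each} of its sides (one parent per component of $\mathcal{S}''$), which is exactly what the freeze condition is built around --- with a single arbitrarily chosen parent, $K$ could be disconnected.
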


\begin{lemma}\label{lem:main-alg-phase-six}
	Phase 6 terminates in $O(n)$ rounds resulting in a unique leader with status \textbf{L} being chosen and all other nodes having status \textbf{U}.
\end{lemma}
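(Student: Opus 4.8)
The plan is to reduce the claim directly to Theorem~\ref{the:sp-tree-le} together with the guarantee of Lemma~\ref{lem:main-alg-phase-five}. First I would observe that at the start of phase~6, by Lemma~\ref{lem:main-alg-phase-five}, the graph $K$ contains every particle and forms a spanning tree; since phase~6 performs no movement operations, all particles remain contracted throughout. Each particle $P$ computes its degree $\delta$ as the number of its neighbors w.r.t.\ $K$ and then runs $\mathtt{Spanning-Tree-Leader-Election}$ (Algorithm~\ref{prot:Spanning-Tree-Leader-Election}) on this tree.

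Next I would invoke Theorem~\ref{the:sp-tree-le}: since $K$ is a tree on $n$ nodes, its diameter $x$ satisfies $x \leq n-1 = O(n)$, so the procedure deterministically elects exactly one particle with status \textbf{L} in $O(x) = O(n)$ rounds, every other particle having forwarded its single outstanding \textit{leadership} message along the tree. A short additional argument then converts the remaining particles to \textbf{U}: once the unique leader $P$ sets its status to \textbf{L}, it floods a $final\_terminate$ message along $K$; because $K$ has diameter $O(n)$, this flooding reaches every particle and terminates in $O(n)$ rounds, at which point each particle other than $P$ adopts status \textbf{U} and halts.

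Summing the two $O(n)$ contributions gives the claimed $O(n)$ bound, and uniqueness of the leader is inherited verbatim from Theorem~\ref{the:sp-tree-le}. The only points requiring care — and the closest thing to an obstacle — are of a bookkeeping nature: verifying that the phase-counter synchronization (messages from particles in a different phase are buffered until both particles agree on the phase) guarantees no particle begins executing Algorithm~\ref{prot:Spanning-Tree-Leader-Election} before $K$ has been finalized at the end of phase~5, and checking that the degree each particle uses is precisely its $K$-degree so that the precondition of Theorem~\ref{the:sp-tree-le} is literally met. Both follow from the phase structure already established in the algorithm description, so no new technical machinery is needed.
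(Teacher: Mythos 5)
Your proposal matches the paper's own argument: the paper likewise deduces correctness directly from Theorem~\ref{the:sp-tree-le} applied to the spanning tree $K$ guaranteed by Lemma~\ref{lem:main-alg-phase-five}, bounds the tree's diameter by $n$ to get $O(n)$ rounds for the election, and adds $O(n)$ rounds for broadcasting the $final\_terminate$ message. Your extra remarks on phase-counter synchronization and using the $K$-degree are just a more explicit statement of the same bookkeeping the paper relies on, so the two proofs are essentially identical.
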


Combining the above lemmas together, we get the total running time and correctness of the main algorithm.

\begin{theorem}\label{the:le-by-moving}
When algorithm $\mathtt{Leader-Election-By-Moving}$ is run by $n$ particles in a contracted configuration, it elects a unique leader deterministically and terminates in $O(L  n^2)$ rounds, where $L$ is the number of particles on the outer boundary of the original shape.
\end{theorem}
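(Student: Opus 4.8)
The plan is to derive Theorem~\ref{the:le-by-moving} as a corollary of the six phase lemmas (Lemmas~\ref{lem:main-alg-phase-one}--\ref{lem:main-alg-phase-six}) together with the reset bound (Lemma~\ref{lem:num-resets}), with the bulk of the new work being the global accounting that glues the phases together.

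For correctness I would walk through the phases in order. If Phase~1 reports $k = 1$, then by Lemma~\ref{lem:main-alg-phase-one} the unique outer-boundary seg-head $P_1$ is declared \textbf{L} and the $final\_terminate$ flood makes every other particle \textbf{U}, so the algorithm is correct and halts; this is the degenerate case. Otherwise $k \in \{2,3,6\}$, and I would invoke Lemmas~\ref{lem:main-alg-phase-two}, \ref{lem:main-alg-phase-three}, and \ref{lem:main-alg-phase-four} in turn to argue that, barring a reset, $H$ is refined from a collection of simple convex polygons (possibly sharing semi-bridge particles) first into such a collection after convexification and then into a collection of convex polygons without sharp vertices plus two-node lines, while $F_1$ stays a spanning forest rooted on the outer boundary and $K$ accumulates the bridge edges. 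Lemma~\ref{lem:main-alg-phase-five} then gives, after Phase~5, one polygon leader per component together with the forest $F_2$, so that at the end of Phase~5 the graph $K$ spans all $n$ particles and is a tree. Finally Lemma~\ref{lem:main-alg-phase-six} (which rests on Theorem~\ref{the:sp-tree-le} applied to $K$) yields exactly one particle of status \textbf{L} and all others \textbf{U}, and the $final\_terminate$ flood along $K$ ensures termination in a contracted configuration (after the informal reversal of snake moves). One point I would verify explicitly is that the phase counter appended to every message, combined with the seg-head-coordinated termination detection of each phase, guarantees that the non-reset runs of distinct phases do not overlap, so that the per-phase lemmas genuinely compose.

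For the running time I would organize the execution into \emph{epochs}: each epoch runs Phases~1--3, and by Lemma~\ref{lem:main-alg-phase-three} it ends either in a reset (which restarts Phase~1, beginning a new epoch) or with $H$ a set of simple convex polygons, in which case this final epoch additionally runs Phases~4--6. By Lemma~\ref{lem:num-resets} there are at most $L$ resets, hence at most $L+1$ epochs. One epoch of Phases~1--3 costs $O(L_m^2) + O(n) + O(Ln)$ rounds by Lemmas~\ref{lem:main-alg-phase-one}--\ref{lem:main-alg-phase-three}; since no particle is ever created or destroyed, every boundary of every intermediate shape has length $O(n)$, so $L_m = O(n)$ and each epoch costs $O(n^2 + Ln) = O(n^2)$ rounds (using $L \le n$). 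Summing over the at most $L+1$ epochs gives $O(Ln^2)$. The final epoch adds Phases~4--6 at cost $O(n) + O(L^2 + n) + O(n) = O(L^2 + n) = O(n^2)$ by Lemmas~\ref{lem:main-alg-phase-four}--\ref{lem:main-alg-phase-six}, which is absorbed into $O(Ln^2)$, as is the $O(n)$ cost of reversing snake moves. Hence the algorithm terminates in $O(Ln^2)$ rounds.

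The main obstacle I anticipate is not any single estimate but the cross-epoch bookkeeping: one must be sure that the reset machinery truly re-initializes all of $G, H, K, F_1, F_2$ and the $seg\_head$ flags, so that each epoch is a bona fide fresh execution of Phases~1--3 on the current shape; that the reset count of Lemma~\ref{lem:num-resets} is stated against the \emph{original} outer boundary length $L$ (so it is not itself degraded by the shrinking of boundaries across epochs); and that no intermediate shape ever develops a boundary of length $\omega(n)$, which is what keeps the $O(L_m^2)$ term at $O(n^2)$. Once these invariants are in place, the summation above is routine.
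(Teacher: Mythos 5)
Your proposal is correct and follows essentially the same route as the paper's own proof sketch: bound one iteration of Phases 1--3 by $O(n^2)$ rounds using Lemmas~\ref{lem:main-alg-phase-one}--\ref{lem:main-alg-phase-three} and $L_m = O(n)$, multiply by the $O(L)$ reset bound of Lemma~\ref{lem:num-resets}, add the $O(L^2+n)$ cost of Phases 4--6, and take correctness from Lemma~\ref{lem:main-alg-phase-six}. Your extra bookkeeping about phase composition and reset re-initialization is a reasonable elaboration but not a departure from the paper's argument.
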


\begin{proof}[Proof Sketch]
From Lemmas~\ref{lem:main-alg-phase-one},~\ref{lem:main-alg-phase-two}, and~\ref{lem:main-alg-phase-three}, the combined running time of one iteration of phases 1 to 3 is $O(n^2)$ rounds since $L_m = O(n)$. There can be at most $O(L)$ iterations of phases 1 to 3, by Lemma~\ref{lem:num-resets}. Adding in the running times of phases four, five, and six from Lemmas~\ref{lem:main-alg-phase-four},~\ref{lem:main-alg-phase-five}, and~\ref{lem:main-alg-phase-six}, it is clear that the total running time of the algorithm is $O(Ln^2)$ rounds.

The correctness directly follows from Lemma~\ref{lem:main-alg-phase-six}.
\end{proof}

\subsection{Initialization}
Recall that a particle can locally determine whether it is a boundary particle or not by checking that at least one adjacent node to it is unoccupied. Non-boundary particles change their phase to phase 2 and wait. The boundary particles run procedure $\mathtt{Boundary-Detection}$ for each of their boundaries. At the end of the procedure, all participating particles know if they are on the outer boundary or an inner one. Furthermore, for the outer boundary, there remain exactly $k \in \{1,2,3,6\}$ particles with $seg\_head = true$ and each of them knows the value of $k$. If more than 1 seg-head remains on the outer boundary, the algorithm proceeds with the remaining phases, else this seg-head becomes the leader and broadcasts its election to all particles.

For each boundary $B$, some $k_B$ particles on that boundary set $seg\_head = true$ and know the value of $k_B$. Each of the $k_B$ seg-heads then sends a $terminate$ message to the remaining $k_B-1$ seg-heads on $B$ and waits to receive $k_B-1$ such $terminate$ messages. Then the seg-head sends a message to all particles in its segment to change their phase counter to 2. If $B$ is an inner boundary, each of the $k_B$ seg-heads set $seg\_head = false$.

At this time, all particles except bridge particles add themselves and any non-bridge edges to~$H$.

The lemma about this phase is proven below.

\begin{proof}[Proof of Lemma~\ref{lem:main-alg-phase-one}]
From Theorem~\ref{the:boundary-detection}, we see that each boundary particle $P$ running $\mathtt{Boundary-Detection}$ detects the type of each of its boundaries $B$ in $O(L_m^2)$ rounds. Furthermore, if $P$ is one of the $k$ seg-heads on boundary $B$ with $seg\_head = true$, it knows the value of $k$.

Furthermore, $H$ is the shape induced by removing bridge particles from the shape. Since only bridge particles can have boundary count 3, the resulting shape in $H$ consists of a set of simple convex polygons, where two simple polygons may share the same semi-bridge particle.
\end{proof}

\subsection{Spanning Forest Formation:}
Recall that at the end of this phase, every particle has joined a tree in the spanning forest $F_1$, where every tree is rooted at one of the outer boundary nodes.

The following procedure is run by each non-outer boundary node. Let $b$ be a node that has not yet joined a tree but has received requests to join trees from a subset $\mathcal{S'}$ ($|\mathcal{S'}| > 0$) of its neighboring occupied nodes $\mathcal{S}$. The node $b$ sends messages to nodes in $\mathcal{S} \setminus \mathcal{S'}$ asking them to become $b$'s children. Once $b$ receives an accept/reject message from each node in $\mathcal{S} \setminus \mathcal{S'}$, $b$ chooses one node in $\mathcal{S'}$ to be its parent, and sends accept/reject messages to each node in $\mathcal{S'}$. 

For each node $a$ in a segment on the outer boundary, $a$ sends a request to all its neighbors for them to become $a$'s children. If $a$ receives a join request from some other node, it rejects it. When $a$ receives an accept/reject message from all its neighbors and an acknowledgement message from its successor in the segment, if any, $a$ sends an acknowledgement message to its  successor node in the segment.\footnote{Notice that the tail of the segment has no successor within the segment and so it sends an acknowledgement to its successor once its receives an accept/reject message from all of its neighbors.}

Termination detection of this phase is coordinated by seg-heads $P_1$ to $P_k$ as follows. Once some $P_i$ receives an acknowledgement message from its successor, if any, and accept/reject messages from its neighbors, it sends a $terminate$ message to the remaining $k-1$ other seg-heads.\footnote{Recall that seg-head $P_i$ can send a message to a seg-head $P_{((i+x -1) \mod k) +1}$ by embedding $x$ into the message as well as a counter $y$ initialized to $0$. Every seg-head that receives the message increments $y$ until $y=x$, indicating that the message has reached the intended target seg-head.} Once $P_i$ receives $k-1$ $terminate$ messages in turn, $P_i$ sends a message to all particles of $F_1$ reachable from $P_i$ informing them to change to phase 3.

We now prove the lemma about this phase.

\begin{proof}[Proof of Lemma~\ref{lem:main-alg-phase-two}]
Consider the height $h$ of the longest tree rooted at some outer boundary node $b$, belonging to segment $s$, that may be formed in $F_1$. It is clear that $h = O(n)$. Thus it takes $O(n)$ rounds to add nodes at each level of the tree, an additional $O(n)$ rounds to send an acknowledgement to the root of the tree that the tree is built, and a further $O(L)$ rounds to inform the head of $s$. Finally, an additional $O(L)$ rounds are required for $s$ to coordinate with the other segments to terminate the phase, resulting in $O(n)$ rounds needed by each particle to terminate the phase.
\end{proof}

\subsection{Convexification}
One step of convexification is now described. Consider a particle $P$ that is not a semi-bridge particle, occupying some node $b$, lying on the outer boundary $B$. Let the tree rooted at $b$, formed in the previous phase as part of $F_1$, be $T$. Now, assume $b$ is a concave vertex w.r.t.\ $B$. Consider the angle bisector of $b$ w.r.t. $B$ and the unique node $c$ that is adjacent to $b$ and lies in the outer face. Let $D$ be the direction of $c$ w.r.t. $b$. First $P$ expands in direction $D$ to node $c$. Then $P$ contracts and pulls one of its children $Q$ into $b$. This process is repeated recursively until a final particle $R$ without a child of its own in $T$ is pulled into some node $d$. $R$ then contracts into $d$ and sends an acknowledgement message to its parent in $T$. This acknowledgement message is  sent recursively through the parents in $T$ until it reaches $P$. Figure~\ref{fig:convexification} illustrates an example of the convexification process.

\begin{figure}
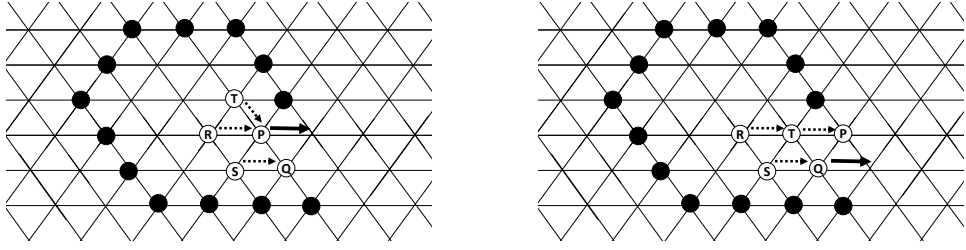
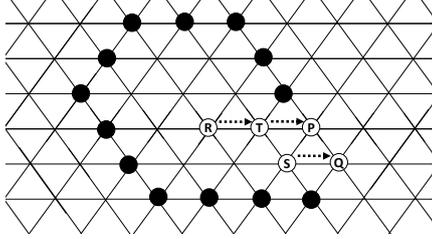

\begin{subfigure}{0.5\textwidth}
  \centering
  \includegraphics[page=30,width=.8\linewidth]{pics.pdf}
  \caption{$P$ is a concave vertex w.r.t.\ the outer boundary.}
  \label{fig:convexification-step1}
\end{subfigure}%
\begin{subfigure}{0.5\textwidth}
  \centering
  \includegraphics[page=31,width=.8\linewidth]{pics.pdf}
  \caption{$Q$ is a concave vertex w.r.t.\ the outer boundary.}
  \label{fig:convexification-step2}
\end{subfigure}
\begin{subfigure}{0.5\textwidth}
  \centering
  \includegraphics[page=32,width=.8\linewidth]{pics.pdf}
  \caption{Convexification done.}
  \label{fig:convexification-step3}
\end{subfigure}
\caption{Convexification of a non-convex polygon with outer boundary particles $P$ and $Q$. $R$ and $T$ are part of $P$'s tree. $S$ is part of $Q$'s tree.}
\label{fig:convexification}
\end{figure}

The procedure run by the seg-heads $P_i$, ($1 \leq i \leq k)$, on the outer boundary to detect termination of this phase is now described. Each seg-head $P_i$ ($1 \leq i \leq k$) continuously checks its segment for any concave vertices in $H$. If none are found, $P_i$ sends a message to all trees rooted at nodes in $P_i$'s segment to move as necessary so that all particles are in a contracted state. Once $P_i$ receives an acknowledgement that this is complete, $P_i$ sends a $terminate$ message to the other $k-1$ seg-heads. Once $P_i$ receives a $terminate$ message from the other seg-heads, $P_i$ sends a message to all particles in $F_1$ belonging to its segment or the subtree rooted at a node in its segment, indicating that this phase of the algorithm is over. 

The type 1 reset procedure is now described. Consider a particle $P$ occupying outer boundary node $b$ with predecessor $c$ and successor $d$. Let it expand in direction $D$ to node $e$. Let $f$ be the node adjacent to $e$ in the direction $D$. After $P$ completes one step of convexification, if $f$ is already occupied, then the outer boundary has just been partitioned into two boundaries at least one of which is now an inner boundary. $P$ generates a $reset$ message which $c$ propagates to one boundary and $d$ propagates to the other. Let $g$ be a node that is on the outer boundary at the start of this phase. When $g$ receives a $reset$ message, it sends a message to all particles in the tree rooted at $g$ to move as necessary so that all particles are in a contracted state. Once $g$ receives an acknowledgement that this is done, it sends a message to all particles in the tree rooted at $g$ telling them to reset their states and change their phase to one. Once $g$ receives an acknowledgement that this is done, $g$ passes on the $reset$ message along the boundary. Once each previous outer boundary particle $g$, for each outer boundary it lies on, sends and receives a $reset$ message or receives a $reset$ message from its predecessor and successor nodes, $g$ restarts the algorithm from phase one. The particle that generated the $reset$ restarts the algorithm once it sends out the $reset$ message to its two neighboring nodes.

The procedure for type 2 resets is now described. Consider a particle $P$ occupying node $b$ which is either a bridge particle, semi-bridge particle, or outer boundary particle. When $P$ changes its nature due to one of its adjacent previously unoccupied nodes becoming occupied, a type 2 reset occurs. In all three cases, $P$ generates the $reset$ message and sends it to particle(s) that caused $P$ to change. These particles in turn send the $reset$ messages to their successor and predecessor nodes along the outer boundary. The $reset$ message itself, once received by a node, operates in the same manner as described for a type 1 reset. Similarly, once a boundary particle other than $P$, for each of the outer boundaries it lies on, sends and receives a $reset$ message or receives a $reset$ message from its predecessor and successor nodes, it restarts the algorithm from phase one.\footnote{Recall that a contracted particle $P$ may lie on at most 3 boundaries with corresponding successor and predecessor nodes, and each of these boundaries may be the outer boundary. If $P$ does not generate a $reset$ message, then for each of these boundaries, either $P$ receives $reset$ messages from both its predecessor and successor or else it receives a $reset$ message from one and passes it on to the other. By checking for either case to occur along each boundary, we ensure that the $reset$ message does not circulate around the boundary forever.} When $P$ passes the $reset$ message to all adjacent particles that influenced it, $P$ restarts the algorithm.

A proof sketch is provided for Lemma~\ref{lem:main-alg-phase-three} and a proof is given for Lemma~\ref{lem:num-resets} below.

\begin{proof}[Proof Sketch of Lemma~\ref{lem:main-alg-phase-three}]
	We first bound the number of nodes a particle may move through during convexification, assuming no reset occurs, and then bound the time it takes to move through one node during convexification.
	
	Graph $H$ consists of a disjoint set of polygons. Consider one such polygon $A$ of outer boundary length $x$. Now, circumscribe this polygon with a regular hexagon $B$ with length of side $y$. It can be easily seen that the polygon obtained after convexification of $A$ fits in or is equal to $B$. Furthermore, it can be seen that $y \leq x$ and the outer boundary length of $B$ is $6y$. Thus, any particle in $A$ does not move more than $6y$ nodes to reach its final position in $B$. Notice that $x = O(L)$ and thus $y = O(L)$. Thus a given particle on the boundary does not move more than $O(L)$ nodes during convexification.
	
	Let us now calculate the time it takes to complete one step of convexification, i.e. have a particle move to a new node and subsequently have all particles in its tree in $F_1$ move into a contracted state. The height of any tree $T$ in $F_1$ is upper bounded by $n$. To move one step, first the root of $T$ performs an expand. Subsequently, each particle in one branch of $T$, from root to leaf must perform a pull, taking $O(n)$ rounds. Then an acknowledgement message is sent to the root in $O(n)$ rounds. Thus, any particle takes $O(Ln)$ rounds to complete the phase.
	
	If a reset occurs, then it takes $O(n)$ rounds to propagate the reset message to all nodes. If the phase terminates successfully, it takes $O(n)$ rounds for the $P_i$, $1 \leq i \leq k$, seg-heads to coordinate the $terminate$ messages. Thus the running time of this phase is $O(Ln)$ rounds.
	
	Clearly, if the termination detection of the phase is successful, i.e. no reset occurs, then each simple polygon in $H$ has successfully been convexified. Note that two simple polygons may share the same semi-bridge particle.\footnote{As an example of such a semi-bridge particle, consider the formation in the top-right of Figure~\ref{fig:bridge-semibridge-particles}.} Thus the resulting graph $H$ is a set of simple convex polygons.
\end{proof}

\begin{proof}[Proof of Lemma~\ref{lem:num-resets}]
	Let us first show that if the original shape has an outer boundary of length $L$, at most $O(L)$ type 2 resets, each of which does not increase the length of the boundary. Subsequently, it is shown that with each type 1 reset, the length of the boundary decreases by at least 1. Thus, there can be at most $O(L)$ resets.
	
	A bridge particle $P$ occupying node $a$ with adjacent occupied node $b$ changes to a semi-bridge particle when an unoccupied node $c$ adjacent to both $a$ and $b$ becomes occupied. If enough nodes are occupied, $P$ can change into just an outer boundary particle or even a non-boundary particle. For each such node $c$, this results in the outer boundary of the shape gaining at most one new node. A similar principle holds for semi-bridge particles and outer boundary particles. There at most $6$ unoccupied nodes surrounding any contracted particle and at most $L$ bridge, semi-bridge, and contracted outer boundary particles in the original shape. Thus, there can be at most $O(L)$ type 2 resets that occur for each outer boundary particle. Furthermore, notice that the particles occupying adjacent nodes to $P$ are themselves outer boundary particles. Thus, the length of the outer boundary does not increase with each type 2 reset and may in fact decrease.	
	
	Now, each type 1 reset occurs when two outer boundary particles that are neither predecessors nor successors to each other along the outer boundary become adjacent to each other. One type 1 reset splits the outer boundary into an inner boundary and an outer boundary.  Assuming no type 2 resets are also triggered before the reset completes, the outer boundary length decreases by at least $1$ after a type 1 reset. Thus there can be at most $O(L)$ type 1 resets. Thus the total number of type 1 and type 2 resets is bounded by $O(L)$. 
\end{proof}

\subsection{De-sharpification}
The procedure to detect termination of the phase 4 is described. Each seg-head $P_i$ ($1 \leq i \leq k$) on the outer boundary continuously checks its segment for any sharp vertices in $H$. If none are found, $P_i$ sends a $terminate$ message to the other seg-heads. Once $P_i$ receives a $terminate$ message from the other seg-heads, $P_i$ sends a message to all particles in its segment and the subtrees rooted at nodes in the segment, indicating that this phase of the algorithm is over.

The lemma concerning this phase is proven below.

\begin{proof}[Proof of Lemma~\ref{lem:main-alg-phase-four}]
First, let us bound the running time and then prove correctness. At the beginning of this phase, $H$ contains a set of simple convex polygons, where two simple polygons may share the same semi-bridge particle. Consider two such simple convex polygons sharing the same semi-bridge particle $P$. This means that $P$ must have occupied adjacent nodes located at ports $x, x+1, x+3,$ and $x+4$ ($\mod 6$) for some positive integer value of $x$. Removing all such particles $P$ from $H$ results in $H$ containing a set of convex polygons and lines with two nodes. Now, from the remaining convex polygons in $H$, sharp vertices can identify themselves locally and remove themselves from $H$. It is clear to see that for each polygon, in $O(n)$ rounds, sharp vertices remove themselves until a convex polygon without sharp vertices is formed or else a line with two nodes is formed. Subsequently, the seg-heads $P_i$, $1 \leq i \leq k$, take $O(n)$ rounds to propagate the $terminate$ message to each other and update the phase for particles in the forest $F_1$.

Regarding correctness, the phase must terminate only when all sharp vertices w.r.t.\ the outer boundary are removed from $H$. It is easy to see that if a sharp vertex $b$ w.r.t.\ the outer boundary is removed and results in another sharp vertex $c$ being formed, then $c$ lies on one of the $k$ segments on the outer boundary. Thus, until all sharp vertices w.r.t.\ the outer boundary are removed, there exists at least one seg-head $P_i$ on whose segment this sharp vertex lies. Thus, only once all sharp vertices are removed from $H$ do particles in $F_1$ receive messages to change the phase.
\end{proof}

\subsection{Leader Selection for Each Polygon}
This phase consists of two stages. In stage one, each particle $P$ in $H$ checks if the node it occupies has a boundary count $3$ w.r.t.\ the outer boundary in $H$. If so, $P$ runs $\mathtt{MIS-Selection}$. Else, $P$ runs $\mathtt{Convex-Polygon-Leader-Election}$. Call the particle that joined the MIS/was chosen as a leader a \emph{polygon leader}.

In stage two, each polygon leader $P$ sends a message to adjacent nodes to become its children in the tree that spans its connected component of $G\setminus K$ with nodes in $K$ that are reachable from $P$ over $G\setminus K$ as leaves of $P$'s tree. Consider a node $b$ in $P$'s connected component of $G\setminus K$ not already part of the tree with $\mathcal{S}$ neighboring occupied nodes and that received messages from $\mathcal{S'}$ ($|\mathcal{S'}| > 0$) nodes to become their child. The node $b$, once activated, chooses one of the $\mathcal{S'}$ nodes arbitrarily as its parent and rejects the others. If $b$ is in $K$, it sends these accept and reject replies to the nodes in $\mathcal{S'}$ when activated. If $b$ is not a semi-bridge particle, it first sends a message to the remaining $\mathcal{S} \setminus \mathcal{S'}$ nodes, asking them to become its children. Once these nodes reply, $b$ in turn sends it accept/reject replies to nodes in $\mathcal{S'}$. Once a polygon leader receives a reply from each of its adjacent nodes, it sends a $freeze$ message to all its children in the tree, which is in turn propagated throughout the tree.

The termination condition for each segment is now described. Consider a semi-bridge particle $Q$ in $K$, at the beginning of the phase, occupying node $a$ and the set $\mathcal{S''}$ of adjacent occupied nodes not including nodes that form bridge edges with $a$. $\mathcal{S''}$ forms either one or two connected components. If $\mathcal{S''}$ consists of two nodes $b$ and $c$ such that both edges $(a,b)$ and $(a,c)$ lie on an inner boundary and outer boundary, consider $b$ and $c$ to belong to the same connected component. When $Q$ has received a $freeze$ message from a node in each of these connected components, $Q$ is said to be a \emph{frozen} semi-bridge particle.

Each seg-head $P_i$ ($1 \leq i \leq k$) on the outer boundary continuously checks its segment for any semi-bridge particles that are not frozen. If none are found, $P_i$ sends an update message to all particles in the subtrees of $F_1$ rooted at nodes in $P_i$'s segment, informing them to add themselves and any edges of $F_2$ to $K$. Once $P_i$ receives an acknowledgement message that the above is done, it sends a $terminate$ message to the other $k-1$ seg-heads. Once $P_i$ receives a $terminate$ message from the other seg-heads, $P_i$ sends a message to all particles in its segment and the subtrees of $F_1$ rooted at nodes in the segment, indicating that this phase of the algorithm is over. 

The lemma for this phase is proven below.

\begin{proof}[Proof of Lemma~\ref{lem:main-alg-phase-five}]
The running time is the sum of running times of $\mathtt{MIS-Selection}$, $\mathtt{Convex-Polygon-Leader-Election}$, tree formation, and coordinating the termination detection of the phase. The latter two take $O(n)$ rounds, and from Observation~\ref{obs:mis-selection} and Theorem~\ref{the:convex-poly-le}, the running time follows.

It can be seen that semi-bridge particles always lie on an outer boundary and thus the phase does not terminate until all semi-bridge particles are frozen. For a given convex polygon with semi-bridge particles, once all these semi-bridge particles are frozen, the tree spanning all particles of that polygon has been created. Once all semi-bridge particles, if any, in the shape are frozen, a forest of trees spanning all such convex polygons has been created. The addition of bridge edges and bridge and semi-bridge particles to these trees results in a tree spanning the entire shape.  Thus, the final $K$ that results is a spanning tree.
\end{proof}

\subsection{Leader Election on a Spanning Tree}
All particles have status \textbf{C} at the start of this phase and are in $K$. Furthermore, the edges of $K$ are restricted so that $K$ forms a spanning tree. Thus particles can participate in $\mathtt{Spanning-Tree-Leader-Election}$ resulting in a unique leader being chosen.

The lemma for this phase is proven below.

\begin{proof}[Proof of Lemma~\ref{lem:main-alg-phase-six}]
The correctness follows directly from Theorem~\ref{the:sp-tree-le}. The diameter of the tree is upper bounded by $n$ and broadcast takes $O(n)$ rounds, thus the running time until termination of this phase is $O(n)$ rounds.
\end{proof}

\section{Chirality Agreement}
\label{sec:chir-agreement}

In this section, the procedure $\mathtt{Chirality-Agreement}$ is described in detail. Consider $n$ contracted particles, with binary flags $ready$ and $chir\_ready$ initially set to $false$, forming a connected shape with the length of maximum boundary being $L_{\max}$. The particles run $\mathtt{Chirality-Agreement}$ and terminate in $O(L_{\max}^2 + n)$ rounds, resulting in all particles agreeing on the same chirality and forming the original shape. 

The procedure works in five phases, briefly explained below. Initially, each boundary particle $P$ identifies each boundary $B$ it lies on and $P$'s predecessor/successor nodes w.r.t.\ $B$. The next phase consists of $P$ identifying, for each of its boundaries $B$, whether $P$'s chirality agrees with that of its neighbors w.r.t.\ $B$. In the third phase, for every boundary $B$, the particles on $B$ coordinate to agree on their chirality.\footnote{A particle $P$ may maintain different chiralities for each of its boundaries in phase two. Each particle eventually chooses one overall chirality in phase four.} Furthermore, each boundary particle $P$ identifies, for each boundary $B$ it lies on, whether $B$ is an inner or outer boundary. Phase 4 consists of particles on the outer boundary with chirality $C$ informing other particles to take on the chirality $C$. The final phase is used by particles to detect termination of the procedure. A detailed explanation of the procedure is given below.

\begin{enumerate}
\item \textit{Initialization:} Each particle $P$ determines whether it is a boundary particle or not. If $P$ is not a boundary particle, $P$ sets $ready = true$. For each boundary $B$ that $P$ lies on, $P$ marks the ports to its predecessor and successor nodes w.r.t.\ $B$ and sets flag $ready = true$.

$P$ moves to the next phase once it and all surrounding particles have $ready = true$.

\item \textit{Chirality identification of neighbors along each boundary:} Consider two neighboring boundary nodes $b$ and $a$ occupied by boundary particles $P$ and $Q$ respectively. Assume that $a$ (and $Q$) is the predecessor of $b$ w.r.t. some boundary $B$, according to $P$'s chirality. Note that $Q$'s chirality may or may not be the same, so it may or may not consider $P$ as the successor. 
   Moreover, both $P$ and $Q$ may have multiple boundaries.  
     How can $P$ tell $Q$ that $Q$ is the predecessor w.r.t. $B$ (rather than to some other boundary they may share)?
     Note that for $B$ to be a boundary, nodes $b$ and $a$ must both neighbor exactly unoccupied node $c$ on boundary $B$. 
     Particle $P$ must now identify $c$ to $Q$ to distinguish it from other unoccupied (in the initial shape) neighbors that $Q$ may have.   
   We do not see how  to
   do that by communication alone. Instead, $P$ expands into $c$. It later sends messages from both $c$ and $b$ to $Q$, received in $Q$ on some ports $p, p'$ of node $a$. Suppose, w.l.o.g. that
the message from $c$ was received on port $p$. $Q$ then replies to each message, specifying the port number it was received on. Next, particle $P$ messages $Q$ from node $c$, telling $Q$ that $Q$ is the predecessor of $P$ with respect to the boundary (of the original shape) corresponding to unoccupied (originally) node $c$, and that the original node of $P$ is on port $p'$ of $a$.
Finally, $P$ contracts back to its original node $b$.
 $P$ performs a similar process for $b$'s successor node w.r.t.\ $B$ and for other boundaries $P$ lies on.

Once $P$ performs this routine for its predecessor and successor nodes along all boundaries that $P$ lies on, and in turn receives messages from those nodes, $P$ moves to the next phase. Thus, at the end of this phase, each particle $P$ knows, for each boundary $B$ it lies on, whether its successor and predecessor nodes along $B$ share the same chirality as $P$ or not.

\item \textit{Chirality agreement on each boundary and boundary identification:} Each particle $P$ participates in $\mathtt{Boundary-Detection}$ on each boundary $B$ it lies on. Once $P$ identifies each of the boundaries it lies on as inner or outer, $P$ moves to the next phase.

Simultaneously, each particle participates in a chirality alignment protocol for each of its boundaries. Define a \emph{common chirality segment} for a boundary $B$ as a contiguous sequence of occupied nodes on $B$ having the same chirality with a unique \emph{head node} and \emph{tail node} denoting those nodes in the segment without a predecessor or a successor, respectively. Consider two common chirality segments $s$ and $s'$ with adjacent tails $b$ and $c$ occupied by particles $P$ and $Q$, respectively. $P$ and $Q$ ``fight" each other and the ``losing" segment takes on the chirality of the ``winning" segment. Furthermore, all particles in the losing segment restart $\mathtt{Boundary-Detection}$. Recall that $b$ and $c$ have a unique common empty node $d$ for each boundary they share.\footnote{Note that $b$ and $c$ may share two common boundaries, both of which may be the outer boundary or a common inner boundary. However, during this phase, $b$ and $c$ do not know that the boundaries they share are the same boundary and hence $b$ and $c$ treat each boundary shared as a different boundary.} $P$ and $Q$ fight for this boundary by attempting to expand into $d$. Without loss of generality, let $P$ expand into $d$. $P$ then informs $Q$ to change its chirality w.r.t.\ $B$ and restart $\mathtt{Boundary-Detection}$, and then $P$ contracts back into $b$. $Q$ subsequently informs nodes in $s'$ to do the same.

\item \textit{Overall chirality agreement:} In this phase, all particles eventually take on the chirality of the outer boundary particles and signal this by setting flag $chir\_ready$ to $true$. An occupied node $b$ with $chir\_ready=false$ is a \textit{frontier node} if $b$ is either an outer boundary node, an inner boundary node which received an $align-chir$ message, or a node with at least two neighboring particles having $chir\_ready = true$, which are adjacent to each other. Consider a frontier node $b$ occupied by particle $P$. $P$ participates in a two stage process before setting $chir\_ready = true$ and moving to the next phase. Stage one consists of $P$ deciding its chirality, based on the type of frontier node $b$ is.

If $b$ is an outer boundary node with chirality $C$, then $P$ sets $C$ as the final chirality.  

Else if $b$ is an inner boundary node which received an $align-chir$ message, then $P$ sets its final chirality accordingly.\footnote{Suppose $b$ has chirality $C$ and received a $align-char$ message from its predecessor along a boundary $B$. The originating message stores the info on whether the final chirality is $C$ or not and $b$ updates its chirality accordingly.} 

Else $b$ has at least two adjacent occupied nodes with $chir\_ready = true$. When $b$ first detects it is a frontier node of this sort, let $P$ record the set of such nodes in $\mathcal{S}$. $P$ sends an $ask$ message to each node $c$ in $\mathcal{S}$, which in turn replies with $predecessor$, $successor$, or $both$ depending on whether $c$ is a predecessor node, successor node, or both w.r.t.\ $\mathcal{S}$ w.r.t.\ the boundary created by the absence of $b$.\footnote{Suppose node $c$ receives an $ask$ message from node $b$. There are either one or two neighboring nodes mutually adjacent to both $b$ and $c$ that have $chir\_ready = true$. If there are two neighboring nodes, $c$ replies with $both$. If there is one neighboring node $a$, then $c$ calculates the following. If $b$ was unoccupied and thus part of a boundary $B$ for $b$ and $a$, $b$ determines it is $a$'s predecessor or successor node w.r.t.\ $B$ and replies with $predecessor$ or $successor$, respectively.} Once $P$ receives replies from all nodes in $\mathcal{S}$, it checks if any new nodes should be added to $\mathcal{S}$. If so, $P$ updates $\mathcal{S}$ and sends new $ask$ messages to all nodes in $\mathcal{S}$. If not and $|\mathcal{S}| < 6$, then there exists at least one sequence of adjacent nodes in $\mathcal{S}$ such that at one end of the sequence, one node returned $predecessor$ and at the other end the other node returned $successor$. $P$ updates its chirality such that it agrees with the orientation of ports increasing from $successor$ to $predecessor$ through this sequence of ports. 

If $|\mathcal{S}| = 6$, $P$ chooses one node $c$ and sends it an $ask-special$ message. Again, if $b$ was unoccupied, then $b$ would be part of some boundary $B$ that $c$ lies on. Let $c$'s successor node w.r.t.\ $B$ be $d$. Now $c$ informs $d$ to send a $reply-special$ message to $b$.\footnote{It is possible for $d$ to uniquely identify the port leading to $b$. If $d$ received the message from $c$ through port $x$, then $d$ sends the $reply-special$ message through port $x+1 \mod 6$.} Once $b$ receives the $reply-special$ message from $d$, $P$ updates it chirality such that it agrees with the orientation of ports increasing from $d$ to $c$. 

Stage two consists of $P$ propagating any $align-char$ message it received to inner boundaries it lies on, if any, and subsequently setting $chir\_ready = true$. If $P$ received an $align-char$ message from its predecessor node on boundary $B$, it sends the same message to its successor node on $B$. $P$ also sends a $chir-align$ message to successor nodes on any other inner boundaries $P$ lies on. Once $P$ has sent and received a $chir-align$ message on every inner boundary it lies on, $P$ sets $chir\_ready = true$.

\item \textit{Termination detection:} For each particle $P$ with $chir\_ready = true$, if all adjacent particles have $chir\_ready = true$, then $P$ terminates the procedure.
\end{enumerate}

\begin{theorem}\label{the:chir-agreement}
Procedure $\mathtt{Chirality-Agreement}$, run by $n$ contracted particles forming a connected shape, terminates in $O(L_{\max}^2)$ rounds, where $L_{\max}$ is the length of the largest boundary in the shape, resulting in all particles having common chirality and retaining the original shape.
\end{theorem}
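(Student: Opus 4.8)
The plan is to verify correctness and the round bound phase by phase, and then to observe that the claimed $O(L_{\max}^2)$ bound already absorbs the additive $O(n)$ term appearing in the informal statement: by a standard discrete isoperimetric estimate, a connected shape whose outer boundary has length at most $L_{\max}$ encloses only $O(L_{\max}^2)$ grid nodes, so $n=O(L_{\max}^2)$. Shape preservation is uniform across the whole procedure: the only movements ever performed are, in Phases 2 and 3, an expansion of a single contracted particle $P$ into a currently unoccupied node $c$ followed, after a bounded exchange of messages routed through $c$, by a contraction of $P$ back into its original node. Since $c$ is a fixed node and activations are atomic, at most one particle occupies $c$ at any time, connectivity is never broken, and at the end of every phase the configuration equals $G(0)$ up to internal state. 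It therefore remains to argue that every particle eventually commits to the chirality of the outer boundary and halts.

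Phases 1 and 2 are essentially local. The only point in Phase 2 requiring an argument is that the unoccupied node $c$ shared by two neighbouring occupied nodes $b$ (occupied by $P$) and $a$ (occupied by $Q$) uniquely names the boundary $B$ along which $b$ and $a$ are adjacent; hence, by expanding into $c$ and then letting $Q$ report the two ports of $a$ on which it received the messages from $c$ and from $b$, particle $P$ can unambiguously tell $Q$ which of $Q$'s ports is the ``successor towards the $c$-side,'' so both particles correctly decide whether their chiralities agree w.r.t.\ $B$. Since a contracted particle lies on at most three boundaries and performs $O(1)$ expand/contract/message steps per boundary, Phases 1 and 2 each finish in $O(1)$ rounds per particle.

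Phase 3 is the crux, and I expect its analysis to be the main obstacle. First, the chirality-alignment sub-protocol: fix a boundary $B$ and view it as a cyclic sequence of occupied nodes partitioned into maximal common-chirality segments, of which there are at most $L_B$ initially. A ``fight'' can occur only between two segments whose tails are adjacent; it is resolved by atomicity (exactly one of the two competing particles expands into their shared empty node), after which the losing segment adopts the winner's chirality and the two segments coalesce into one, strictly decreasing the number of common-chirality segments on $B$. Hence at most $L_B-1$ fights ever occur on $B$; each costs the $O(L_B)$ rounds needed to propagate the chirality flip and the restart signal along the losing segment, for $O(L_B^2)=O(L_{\max}^2)$ rounds per boundary, and as boundaries make progress concurrently the chirality stabilises everywhere within $O(L_{\max}^2)$ rounds. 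Meanwhile $\mathtt{Boundary-Detection}$ is restarted each time a segment flips, but every such restart occurs \emph{inside} this $O(L_{\max}^2)$-round window, so it costs no extra rounds, and once $B$'s chirality is fixed, Theorem~\ref{the:boundary-detection} guarantees that the final run of $\mathtt{Boundary-Detection}$ on $B$ correctly labels $B$ as inner or outer in a further $O(L_{\max}^2)$ rounds. The delicate part is the interaction of these two concurrent processes on the portions of $B$ that do \emph{not} flip: one must show their stale $\mathtt{Boundary-Detection}$ state cannot cause a premature, incorrect inner/outer decision — the ``$count$ has not changed'' guard inside $\mathtt{DetectTermination()}$ is precisely what prevents this, and verifying that it does so in all configurations (including the edge case of Figure~\ref{fig:stretch-expansion-edge-case}) is the heart of the argument.

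For Phase 4 I would show that the set of particles with $chir\_ready=true$ starts as the outer boundary and grows monotonically until it covers the (connected) shape: any particle $P$ adjacent to the current $chir\_ready$ region becomes a frontier node and, by inspecting the ports through which it sees $chir\_ready$ neighbours and querying them with $ask$ / $ask$-$special$ messages (which terminate in $O(1)$ rounds since the relevant neighbour set has size at most $6$), deduces the unique chirality consistent with its already-committed neighbours, while a node on an inner boundary additionally waits for an $align$-$chir$ message circulated once around that inner boundary so that the whole inner boundary commits to a mutually consistent orientation before the loop around the hole is closed. Connectivity of $G(0)$ forces the frontier to reach every particle, so all particles end with the outer chirality $C$; the frontier touches each particle $O(1)$ times and pays an extra $O(L_{\max})$ per inner boundary, and since the inner boundaries are pairwise disjoint with total length $O(n)$ this is $O(n)$ rounds. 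Phase 5 is then a single local check — a particle halts once all of its neighbours have $chir\_ready=true$, which by the monotone growth of the $chir\_ready$ region can only happen after the frontier has fully passed it, so no particle halts while any neighbour is uncommitted — and it propagates in $O(n)$ rounds. Summing, Phases 1, 2, 5 cost $O(n)$, Phase 3 costs $O(L_{\max}^2)$, and Phase 4 costs $O(n)$, for a total of $O(L_{\max}^2+n)=O(L_{\max}^2)$ rounds, with every particle sharing the chirality $C$ of the outer boundary and the shape equal to $G(0)$.
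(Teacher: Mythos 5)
Your proposal is correct and follows essentially the same route as the paper: a phase-by-phase analysis in which Phases 1--2 cost $O(1)$, Phase 3 costs $O(L_{\max}^2)$ (chirality fights plus $\mathtt{Boundary-Detection}$), Phases 4--5 cost $O(n)$, the additive $O(n)$ is absorbed because $L_{\max} = \Omega(\sqrt{n})$, correctness comes from propagating the unique outer-boundary chirality through frontier nodes, and shape preservation follows because every expansion in Phases 2--3 is undone by a contraction back to the original node. Your only departures are minor refinements rather than a different approach: you bound Phase 3 by counting fights (each fight merges two common-chirality segments, so at most $L_B - 1$ occur per boundary) where the paper loosely bounds per-particle chirality resets, and you explicitly flag --- without fully resolving --- the interaction between restarted $\mathtt{Boundary-Detection}$ runs and the stale state of non-flipped segments, a subtlety the paper's own proof also leaves unaddressed.
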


\begin{proof}
In order to show that all particles agree on chirality at the end of the procedure, we make use of two lemmas.

\begin{lemma}\label{lem:chir-frontier-node}
All particles in the shape eventually become frontier nodes.
\end{lemma}

\begin{proof}
Every particle $P$ is either an outer boundary particle, an inner boundary particle, or a non-boundary particle. By default, all outer boundary particles are frontier nodes. It can be seen by induction that every non-boundary particle eventually satisfies the condition that at least two of its neighboring particles which are adjacent to each other have $chir\_ready = true$. Every inner boundary particle either satisfies the previous condition or receives a $chir-align$ message. Thus every node eventually becomes a frontier node.
\end{proof}

\begin{lemma}\label{lem:chir-terminate}
Every frontier node terminates the procedure with the chirality agreed upon by the outer boundary.
\end{lemma}

\begin{proof}
Recall that the outer boundary is unique and all particles on it agree to the same chirality $C$ once phase 3 is complete. $C$ is the only chirality propagated to other particles in the system, and so once the procedure terminates all particles have chirality $C$. 
\end{proof}

It is clear from  Lemmas~\ref{lem:chir-frontier-node} and~\ref{lem:chir-terminate} that the procedure terminates with all particles agreeing on the same chirality. 

The running time of the procedure is the sum of the time taken in each phase. It is easy to see that phase 1 and 2 take $O(1)$ rounds each to complete. Phase 3 is the sum of the time it takes nodes on a boundary to agree on a chirality and the time it takes to complete $\mathtt{Boundary-Detection}$ on that boundary. For a boundary $B$, each particle $P$ on $B$ may undergo $O(L_{\max})$ resets of its chirality, with each reset taking at most $O(L_{\max})$ rounds to propagate to all nodes in $P$'s common chirality segment. Thus it takes $O(L_{\max}^2)$ rounds for particles on a boundary to agree on the same chirality. From Theorem~\ref{the:boundary-detection}, it takes an additional $O(L_{\max}^2)$ rounds to successfully run $\mathtt{Boundary-Detection}$ on any boundary. Phase 4 consists of transmitting the chirality of the outer boundary particles to all particles within the shape and takes no more than $O(n)$ rounds. Finally, phase 5 is completed in an additional $O(n)$ rounds. Recall that for any connected shape in the triangular grid, $L_{\max} = \Omega( \sqrt{n})$. Thus the total running time of the procedure is $O(L_{\max}^2)$ rounds.

Finally, we show that the original shape is retained at the end of the procedure. Notice that particles only expand and contract in phases 2 and 3. In both phases, a particle $P$ occupying node $b$ performs an expansion to some other node but always subsequently contracts back into $b$. Thus the original shape is unchanged at the end of the procedure.
\end{proof}


\section{Conclusion and Future Work}
\label{sec:conc}

In this paper, we developed algorithm $\texttt{Leader-Election-By-Moving}$, that when run by any contracted configuration of particles with common chirality, elects a unique leader deterministically. Subsequently, procedure $\texttt{Chirality-Agreement}$ was presented, which could be run by any contracted configuration of particles to agree on chirality deterministically. Thus any contracted configuration of particles, regardless of whether all particles have common chirality or holes are present, can run the procedure and then the algorithm in order to elect a unique leader deterministically. 

The results of this paper leave several lines of research open.
First, the algorithms here require the particles to move for leader election and for chirality agreement. Is it possible to solve either problem deterministically in the given setting without requiring particles to move? Second, can one reduce the running time or provide a matching lower bound?


\bibliographystyle{plainurl} 
\bibliography{references}

\end{document}